\newtheorem{thm}{Theorem}
\newtheorem{lem}[thm]{Lemma}
\newdefinition{rmk}{Remark}
\xpatchcmd{\proof}{\itshape}{\prooflabelfont}{}{}
\newcommand{\prooflabelfont}{\bfseries}
\journal{Mathematical Biosciences}
\begin{document}

	\begin{frontmatter}

		\title{%
			Single-species population models with\\
			age structure and psychological effect in a polluted environment
		}

		\author[a]{Jiawei Wang}

		\author[a,b]{Ruiyang Zhou}

		\author[a,c]{Fengying Wei\corref{cor1}}
		\cortext[cor1]{Corresponding author}
		\ead{weifengying@fzu.edu.cn}

		\affiliation[a]{
			organization={School of Mathematics and Statistics, Fuzhou University},
			city={Fuzhou},
			postcode={350116},
			state={Fujian},
			country={China}
		}

		\affiliation[b]{
			organization={Institute of Fundamental and Frontier Sciences, University of Electronic Science and Technology of China},
			city={Chengdu},
			postcode={611731},
			state={Sichuan},
			country={China}
		}

		\affiliation[c]{
			organization={Key Laboratory of Operations Research and Control of Universities in Fujian, Fuzhou University},
			city={Fuzhou},
			postcode={350116},
			state={Fujian},
			country={China}
		}

		\begin{abstract}
			This paper considers a single-population model with age structure and psychological effects in a polluted environment.
			We divide the single population into two stages of larval and adult structure.
			The model uses Logistic input, and the larvae are converted into adult bodies by constant ratio.
			We only consider adulthood.
			The role of psychological effects makes the contact between adult and environmental toxins a functional form, while the contact between larvae and environmental toxins is linear.

			For the deterministic model embodied as a nonlinear time-varying system, we discuss the asymptotic stability of the system by Lyapunov one-time approximation theory, and give a sufficient condition for stability to be established.

			Considering that the contact rate between biological and environmental toxins in nature is not always constant, we make the contact rate interfere with white noise, and then modify the contact rate into a stochastic process, thus establishing a corresponding random single-population model.
			According to Itô formula and Lyapunov in the function method, we first prove the existence of globally unique positive solutions for stochastic models under arbitrary initial conditions, and then give sufficient conditions for weak average long-term survival and random long-term survival for single populations in the expected sense.
		\end{abstract}

		\begin{keyword}
			polluted environment \sep
			stochastic single-species population models \sep
			age structure \sep
			psychological effect \sep
			stability \sep
			persistence
		\end{keyword}

	\end{frontmatter}



	\section{Introduction}
	\label{sec:introduction}


	With the development of society, the problem of environmental pollution caused by human activities and industrial production has become increasingly serious, environmental pollution has always been the most threatening social and ecological problem.
	All kinds of pollutants discharged into the environment have a great impact on the normal survival of all kinds of organisms in the environment.
	All kinds of organisms, including human beings, are threatened by all kinds of poisons in the environment.
	At present, many scholars have established mathematical models to describe it.

	Since Hallam et al.~\cite{RN01,RN02,RN03,RN04,RN05}, scholars have paid more and more attention to this problem.
	In 1983, Hallam et al.~\cite{RN01} first proposed a classical deterministic mathematical model for the survival of a single population in a polluted environment.
	Other scholars continue to study the environmental pollution model~\cite{RN06,RN07,RN08,RN09,RN10,RN12}.
	Srinivasu~\cite{RN13} considered the impact of biological death on the concentration of environmental toxins, studied the problem of population's long-term survival and extinction, and considered to ensure the population's long-term survival by increasing the amount of environmental toxins removed.

	The growth of species is basically accompanied by a process of development, i.e. from adulthood to old age, from immature to mature, from young individual to adult individual.
	Each stage of the growth process often shows different characteristics, such as the young individual does not have the ability of reproduction and predation.
	Due to the limitation of intelligence development and survival experience, the young individual often does not have the ability of self avoiding danger.
	Compared with other stages of population, the ability of survival and resources competition of young individuals is weak.
	Young individuals are often easy to die, and unable to complete large-scale migration in space.

	The adult population has the ability of reproduction and predation, a high level of intellectual development.
	It is also superior in survival experience and can actively avoid danger, thus has a strong survival ability and the ability to compete with other populations for limited survival resources.
	Some of them experience that one side is injured or even killed in order to compete for spouse.
	There are significant differences in the behavior of the same stage.
	In addition, there is always a relationship of mutual transformation between organisms in different stages, which will have an impact on the extinction and long-term survival of organisms.
	Therefore, it is of practical significance to study the population model with stage structure, which has also attracted the attention of many mathematicians and biologists.

	The related work of stage structure population model can be traced back to the three-stage structure self feeding model proposed by Landahl and Hansen~\cite{RN17}, in the 1970s, and the two-stage structure stochastic model proposed by Tognetti~\cite{RN18}.
	Later, many scholars also proposed different stage structure models~\cite{RN20}.
	But until 1990, Aiello and Freedman~\cite{RN19} proposed the single group delay model with two-stage structure.
	The related research only then truly ushers in the upsurge~\cite{RN21,RN23,RN24,RN25,RN26,RN27,RN28}.

	For vertebrates with a certain level of intelligence, when life safety is threatened by diseases, predators, environmental pollution and other external threats, there are often a series of psychological effects such as fear.
	Compared with other creatures with a low level of intelligence, psychological effects affect the daily behavior of the whole population to a certain extent.

	In recent years, many scholars have tried to make the mathematical models to depict the influence of psychological effects.
	Many of them study human groups directly~\cite{RN29,RN30,RN31,RN32,RN33,RN34,RN35}, and some of them study other creatures.
	In 2018, Kumar~\cite{RN36}, studied the prey predator model of Allee effect induced by fear.
	In 2019, Lan et al.~\cite{RN37}, considered the long-term behavior of a single population model with psychological effects under the pulse input of environmental toxins.
	Assuming that the internal growth rate of a single population is interfered by white noise, a corresponding stochastic model was established.


	In this paper, we mainly consider vertebrates with sensory organs and highly differentiated nervous system.
	Compared with organisms without this structure in nature, sensory organs can interpret information in dynamic environment.
	For vertebrates, such as fish and birds, we can detect and process information in the physical world.
	When the environment is polluted, these organisms either bear the toxin concentration of surrounding environment or escapes from the living area when the psychological effect plays a role.
	Combined with the age structure, we naturally think that the intelligence level of young individuals is lower than that of adult individuals.
	Therefore, when considering the psychological effect, we should consider the two groups separately.
	Based on the above characteristics, we will study the single population with age structure and psychological effect in the polluted environment dynamic behavior.

	In the first part of this paper, based on the previous studies, a single population model with age structure and psychological effects is established.
	The model is a nonlinear time-varying system.
	According to Lyapunov's first approximation theory, we study the related properties of the original system by studying the stability of the zero solution of the approximate system.

	In the second part of this paper, we consider that the contact rate between organisms and environmental poisons is disturbed by white noise, and then we establish a stochastic model.
	At the same time, we give the proof of the weak average persistence and stochastic persistence of the single population under certain conditions.

	\section{Stability analysis of deterministic model}
	\label{sec:ode_sec}

	\subsection{Deterministic model}
	\label{subsec:ode_model}

	A single population model with psychological effects in polluted
	environment is proposed~\cite{RN34}
	\begin{equation}
		\dot{x}(t)=x(t)(b-d)-cx^2(t)-\alpha x(t)C_o(t)-\lambda x(t)g(C_e,
		\beta), t\in[0, \infty)
	\end{equation}

	The variables of the model are $x=x(t)$, which means the number of
	population at time $t$, $C_o=C_o(t)$, the degree of individual
	toxins at time $t$, $C_e=C_e(t)$, the concentration of toxins in
	living environment at time $t$, $b, d, c$ represent the natural
	birth rate, mortality rate and biological intra-specific
	restriction factors when the toxin is zero.
	$\alpha$
	represent the response intensity of organism growth to the toxin.
	$\lambda$
	represent the contact rate between organism and environmental toxicant.
	$\beta$ refers to the inhibition factors or
	psychological effects of the population in the polluted environment.
	To some extent,  $\beta$ describes the sensitivity of
	the population in the polluted environment.
	$\alpha x(t)C_o(t)$
	describes the removal amount of endotoxin in the organism at time
	$t$.
	The toxic effect of the polluted environment is represented
	by $g(C_e, \beta)$~\cite{RN31}
	\begin{equation}
		g(C_e, \beta)=\frac{C_e}{1+\beta C_e^2}
	\end{equation}
	also known as contact rate or psychological effect.
	Age structure
	is introduced, and the specific model is as follows
	\begin{equation}
		\begin{array}{lll}
			\dot{x}(t)=by(t)-d_1x(t)-\gamma x(t)-c_1x^2(t)-\alpha_1
			x(t)C_o(t)-\lambda_1 x(t)g_1(C_e(t), \beta), \\[3mm]
			\dot{y}(t)=\gamma x(t)-d_2y(t)-c_2y^2(t)-\alpha_2
			y(t)C_o(t)-\lambda_2 y(t)g_2(C_e(t), \beta),
		\end{array}
	\end{equation}

	$x, y$ refers to the number of young population and adult
	population, both of which are related to time $t$.
	In this paper,
	we stipulate that the values of $x$ and $y$ are all in the range
	of $[1, \infty)$.
	$\gamma$ represents the transformation rate from
	larva to adult, and the meaning of other symbols is consistent
	with the original.
	The toxin action in the polluted environment at
	time $t$ is indicated by $g_i(C_e, \beta), i=1,2$.
	Because there
	are differences in intelligence level and survival experience
	between larva and adult, we choose the following functions to
	describe the effect of environmental toxin respectively
	\begin{equation}
		g_1(C_e, \beta)=C_e, g_2(C_e, \beta)=\frac{C_e^p}{1+\beta C_e^q},
		1\leqslant p\leqslant q, p, q\in \mathbb{N}_+
	\end{equation}

	It is assumed that the change rule of endotoxin concentration
	meets the following requirements
	\begin{equation}
		\dot{C_o}(t)=kC_e(t)-gC_o(t)-mC_o(t)-bC_o(t)
	\end{equation}
	Among them, $kC_e(t)$ is the ratio of biological absorption of
	environmental toxins at the time of $t$.
	$gC_o(t)$ is the ratio of
	biological elimination of toxins at the time of $t$.
	$mC_o(t)$ is
	the ratio of biological purification of toxins at the time of $t$.
	$bC_o(t)$ is the ratio of toxins lost by new individuals at the
	time of $t$.
	We always assume that the environmental capacity is
	large enough.
	We also assume that the toxins in new individuals
	should be very small, and the accumulation of toxins from birth to
	adulthood is very small, so the effect of toxin concentration on
	the density of a single population can be ignored.
	The change rate
	of toxin concentration in the environment is as follows
	\begin{equation}
		\dot{C_e}(t)=u_e(t)-hC_e(t),
	\end{equation}
	$hC_e(t)$ is the ratio of toxin reduction caused by environmental
	self purification at time $t$.
	$hC_e(t)$ is the concentration of
	exogenous input toxin entering the environment at time $t$, and it
	is assumed to be a bounded non-negative differentiable function of
	time $t$.

	A single population model with age structure and psychological
	effects in the polluted environment is established:
	\begin{equation}
		\label{eq:ODEmodelFull}
		\left\{
		\begin{array}{ll}
			\dot{x}(t)=by(t)-d_1x(t)-\gamma x(t)-c_1x^2(t)-\alpha_1
			x(t)C_o(t)-\lambda_1 x(t)g_1(C_e(t), \beta), \\[3mm]
			\dot{y}(t)=\gamma x(t)-d_2y(t)-c_2y^2(t)-\alpha_2
			y(t)C_o(t)-\lambda_2 y(t)g_2(C_e(t), \beta),\\[3mm]
			\dot{C_o}(t)=kC_e(t)-gC_o(t)-mC_o(t)-bC_o(t),\\[3mm]
			\dot{C_e}(t)=u_e(t)-hC_e(t).
		\end{array}
		\right.
	\end{equation}

	The study of Bronstein\cite{RN43}, Cantalupo\cite{RN44}, Castro\cite{RN45},
	Dzieweczynskia\cite{RN46} showed that Betta splendens Regan was very
	aggressive in courtship, and the death of adult was mainly caused
	by intra-specific competition.
	Hagvar~\cite{RN47} and Uka~\cite{RN48} showed
	that the death of adult was mainly caused by intra-specific
	competition.

	Therefore, according to these biological examples, When $d_2y(t)\ll c_2y^2(t)$, so $d_2y(t)+c_2y^2(t)\approx
	c_2y^2(t)$, we modify the model~\eqref{eq:ODEmodelFull} and rewrite it as
	\begin{equation}
		\label{eq:ODEmodel}
		\left\{
		\begin{array}{lll}
			\vspace{1ex}
			\dot{x}(t)=by(t)-dx(t)-\gamma x(t)-c_1x^2(t)-\alpha_1 x(t)C_o(t)-\lambda_1 x(t) C_e(t),
			\\\vspace{1ex}
			\displaystyle \dot{y}(t)=\gamma x(t)-c_2y^2(t)-\alpha_2 y(t)C_o(t)-\frac{\lambda_2 y(t)C_e^p(t)}{1+\beta C_e^q(t)},
			\\\vspace{1ex}
			\dot{C_o}(t)=kC_e(t)-gC_o(t)-mC_o(t)-bC_o(t),
			\\\vspace{1ex}
			\dot{C_e}(t)=u_e(t)-hC_e(t).
		\end{array}
		\right.
	\end{equation}

	\subsection{Stability analysis of nonlinear time-varying system}
	\label{subsec:ode_stability}

	From~\eqref{eq:ODEmodel}, we can get
	\begin{equation}
		\label{eq:CoCe}
		\left\{
		\begin{array}{lll}
			\vspace{1ex}
			C_o(t)=\displaystyle k\int_{0}^{t}C_e(s)e^{-(g+m+b)(t-s)}\mathrm{d}s+C_o(0)e^{-(g+m+b)t},
			\\\vspace{1ex}
			C_e(t)=\displaystyle
			\int_{0}^{t}u_e(s)e^{-h(t-s)}\mathrm{d}s+C_e(0)e^{-ht}.
		\end{array}
		\right.
	\end{equation}

	Denote $ X(t)=(x(t),y(t))^{T}$,~\eqref{eq:CoCe} could be described as follows
	\begin{equation}
		\label{eq:linear system sum}
		\dot{X}(t)=A(t)X(t)+G(X(t)),
	\end{equation}
	where
	\begin{equation}
		\begin{array}{l}
			\vspace{1ex}
			A(t) =\mbox{$\left(
			\begin{array}{ccc}
				-d-\gamma-\alpha_1 C_o(t)-\lambda_1 C_e(t) & b                                                  \\
				\gamma                                     & -\alpha_2 C_o(t)-\frac{\lambda_2 C_e^p(t)}{1+\beta
				C_e^q(t)}
			\end{array}\right)$},
			\\\vspace{1ex}
			G(X(t))=\mbox{$\left(
			\begin{array}{lll}
				-c_1x^2(t) \\
				-c_2y^2(t)
			\end{array}
			\right) $}
		\end{array}
	\end{equation}

	Omitting the high-order term $G(x(t))$, obtaining the approximate
	linear system
	\begin{equation}
		\label{eq:linear system}
		\dot{X}(t)=A(t)X(t),
	\end{equation}
	linear system~\eqref{eq:linear system} is called the first order approximate equation of~\eqref{eq:linear system sum}.
	According to Lyapunov's first approximation theory~\cite{RN49}, if
	$G(X(t))$ is a higher order infinitesimal of $X(t)$ in the
	neighborhood of $X=(0, 0)^T$, the stability of~\eqref{eq:linear system sum} can often be
	studied by using the stability of linear system~\eqref{eq:linear system}.
	In fact,
	we can easily get
	\begin{equation}
		\lim_{t\to \infty}\frac{\|G(X(t))\|}{\|X(t)\|}=0,
	\end{equation}
	So next we discuss the stability of linear system~\eqref{eq:linear system}.

	\begin{thm}
		\label{eq:thm:ode_stability}
		If $|A(t)|>0$, there exists an
		$\varepsilon$ such that $|\dot{a}_{ij}(t)|\leqslant \varepsilon$,
		then the trivial solution of model~\eqref{eq:linear system} is globally
		asymptotically stable.
	\end{thm}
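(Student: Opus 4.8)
The plan is to use the frozen-time (parameter-dependent) Lyapunov function associated with the Lyapunov matrix equation, which is the standard device for slowly time-varying linear systems. First I would record the sign structure of $A(t)$. From~\eqref{eq:CoCe} and the non-negativity of $u_e$, both $C_o(t)$ and $C_e(t)$ are non-negative and bounded, so every entry of $A(t)$ is bounded; moreover $a_{11}(t)=-d-\gamma-\alpha_1C_o(t)-\lambda_1C_e(t)\le-(d+\gamma)<0$ and $a_{22}(t)\le 0$, whence $\operatorname{tr}A(t)\le-(d+\gamma)<0$ uniformly in $t$. Combined with the hypothesis $|A(t)|>0$, the eigenvalues of the frozen matrix have negative sum and positive product, so $A(t)$ is Hurwitz at each fixed $t$. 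Using boundedness of the entries together with the determinant condition, I would upgrade this to a uniform estimate $\operatorname{Re}\lambda_i(A(t))\le-\sigma$ for some $\sigma>0$ independent of $t$, i.e. the eigenvalues stay in a compact subset of the open left half-plane.

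Next, for each $t$ I would solve $A^{T}(t)P(t)+P(t)A(t)=-I$ via the explicit formula $P(t)=\int_0^\infty e^{A^{T}(t)s}e^{A(t)s}\,\mathrm ds$. The uniform Hurwitz bound guarantees convergence of this integral and yields a symmetric positive-definite $P(t)$ with $c_1 I\le P(t)\le c_2 I$ for constants $0<c_1\le c_2$ that do not depend on $t$. I would then take the candidate $V(t,X)=X^{T}P(t)X$, which satisfies $c_1\|X\|^2\le V\le c_2\|X\|^2$, so $V$ is a legitimate time-varying Lyapunov function provided its derivative turns out negative.

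Differentiating $V$ along trajectories of~\eqref{eq:linear system} gives $\dot V=X^{T}(A^{T}P+PA)X+X^{T}\dot P(t)X=-\|X\|^2+X^{T}\dot P(t)X$, so the argument reduces to controlling $\dot P(t)$. Differentiating the Lyapunov equation shows that $\dot P$ itself solves $A^{T}\dot P+\dot P A=-(\dot A^{T}P+P\dot A)$, whose right-hand side is bounded by a constant multiple of $\|\dot A(t)\|\le 2\varepsilon$ (using $|\dot a_{ij}|\le\varepsilon$) and the uniform bound on $P$; hence $\|\dot P(t)\|\le C\varepsilon$ for an explicit $C=C(\sigma,c_2)$. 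Therefore $\dot V\le-(1-C\varepsilon)\|X\|^2$, and once $\varepsilon$ is small enough that $C\varepsilon<1$, this is negative definite, giving global (indeed exponential) asymptotic stability of the trivial solution by the Lyapunov theorem for time-varying systems; for a linear system, local asymptotic stability is automatically global.

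I expect the main obstacle to be the uniform Hurwitz estimate $\operatorname{Re}\lambda_i(A(t))\le-\sigma$. The negativity of the trace is automatic from the sign structure, but keeping the eigenvalues bounded away from the imaginary axis requires the positive determinant to be bounded below uniformly in $t$, which in turn rests on the boundedness of $C_o,C_e$ coming from~\eqref{eq:CoCe}. Without such a uniform floor on $|A(t)|$, the solution $P(t)$ of the Lyapunov equation could grow without bound and the smallness-of-$\varepsilon$ comparison in the final step would break down; this is the one place where genuine quantitative care, rather than routine computation, is required.
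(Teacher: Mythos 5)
Your proposal is correct and follows essentially the same route as the paper: both construct a frozen-time quadratic Lyapunov function for the slowly varying linear system (the paper via Barabashin's determinant formula built from the Routh--Hurwitz minors $\Delta_1(t),\Delta_2(t)$, you via the integral solution of $A^{T}P+PA=-I$), bound the time-derivative of its coefficients by a constant multiple of $\varepsilon$ using $|\dot a_{ij}|\leqslant\varepsilon$, and then choose $\varepsilon$ small enough that $\dot V$ stays negative definite. The obstacle you flag --- that $|A(t)|>0$ alone does not yield a uniform Hurwitz margin, which is needed for the uniform two-sided bounds on the quadratic form --- is genuine, and it is exactly the point at which the paper's own proof silently strengthens the hypothesis to $\operatorname{Re}\lambda(A(t))\leqslant-\delta<0$.
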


	\begin{proof}
		Let
		\begin{equation}
			A(t)=(a_{ij}(t))_{2\times 2}= \mbox{$\left(
			\begin{array}{ccc}
				-\Gamma_1(t) & b            \\
				\gamma       & -\Gamma_2(t)
			\end{array}
			\right) $}
		\end{equation}
		where $a_{ij}(t)$ are differentiable with $|a_{ij}(t)|\leqslant
		b\vee (d+\gamma+\alpha_1+\lambda_1)\vee (\alpha_2+\lambda_2)\vee
		\gamma \geqslant a$.
		The characteristic equation of $A(t)$ gives
		the following roots
		\begin{equation}
			\lambda(A(t))=\frac{-(\Gamma_1(t)+\Gamma_2(t))\pm\sqrt{(\Gamma_1(t)-\Gamma_2(t))^2+4b\gamma}}{2}.
		\end{equation}
		When $|A(t)|>0$, each eigenvalue of  $A(t)$ has a negative real
		part, i.e. $\mbox{Re}\lambda(A(t))<0$.
		Rewrite the characteristic
		equation as follows
		\begin{equation}
			f_A(\lambda)=\lambda^2+p_1(t)\lambda+p_2(t)=0,
		\end{equation}
		where
		\begin{equation}
			p_1(t)=-\sum_{i=1}^2 a_{ii}(t)=\Gamma_1(t)+\Gamma_2(t), \quad
			p_2(t)=|A(t)|=\Gamma_1(t)\Gamma_2(t)-b\gamma.
		\end{equation}
		By Routh-Hurwitz theorem,
		\begin{equation}
			\Delta_1(t)=p_1(t)>0, \quad \Delta_2(t)=p_1(t)p_2(t)>0.
		\end{equation}
		Let
		\begin{equation}
			x=x_1, y=x_2,
		\end{equation}
		to find $v(x_1, x_2)$ satisfying
		\begin{equation}
			\label{eq:v(x_1, x_2) 1}
			\sum_{i=1}^{2}\frac{\partial v}{\partial
			x_i}\sum_{j=1}^{2}a_{ij}x_j=-2\prod_{i=1}^{2}\Delta_i(t)\sum_{j=1}^{2}x_j^2.
		\end{equation}

		According to Barabashin formula,
		\begin{equation}
			v(x_1, x_2)=\frac{\Delta_1(t)\Delta_2(t)}{\Delta(t)}
			\left|\begin{array}{cccc}
					  0 & x_1^2 & 2x_1x_2 & x_2^2
					  \\
					  1 & a_{11} & a_{21} & 0
					  \\
					  0 & a_{12} & a_{11}+a_{22} & a_{21} \\
					  1 & 0      & a_{12}        & a_{22}
			\end{array}
			\right|,
		\end{equation}
		\begin{equation}
			\Delta(t)=
			\left|\begin{array}{cccc}
					  a_{11} & a_{21} & 0
					  \\
					  a_{12} & a_{11}+a_{22} & a_{21} \\
					  0      & a_{12}        & a_{22}
			\end{array}
			\right|,
		\end{equation}
		which could be written as
		\begin{equation}
			v(x_1, x_2)=C(t)\sum_{i,j=1}^{2}v_{ij}(t)x_ix_j,
		\end{equation}
		with $v_{ij}(t)=v_{ji}(t),
		C(t)=\frac{\Delta_1(t)\Delta_2(t)}{\Delta(t)}$, where $v_{ij}$ is
		the determinant obtained by exchanging the first column and the
		$2x_ix_j$-column ($x_i^2$-column) and then removing the first row
		and the first column in the exchanged determinant.
		Therefore
		\begin{equation}
			\label{eq:v(x_1, x_2) 2}
			v(x_1, x_2)=\sum_{i,j=1}^{2}V_{ij}(t)x_ix_j,
		\end{equation}
		with $V_{ij}(t)=C(t)v_{ij}(t)$ and $V_{ij}(t)=V_{ji}(t)$.
		Next, we
		will prove that function $v(x_1, x_2)$ is positive definite, given
		quadratic form
		\begin{equation}
			W=-\prod_{i=1}^{2}\Delta_i(t)\sum_{j=1}^{2}x_j^2
		\end{equation}

		From~\eqref{eq:v(x_1, x_2) 1}, we can find the uniquely determined $v(x_1, x_2)$, so
		the $v(x_1, x_2)$ expressed by~\eqref{eq:v(x_1, x_2) 2} should be consistent with
		the Lyapunov function in reference~\cite{RN51}.
		\begin{equation}
			\begin{array}{lll}
				v(x_1, x_2) & = & \displaystyle \Delta_2(t)\sum_{j=1}^{2}x_j^2+
				\sum_{\rho=1}^{2-1}\sum_{j=1}^{2}\prod_{s=1,s\not=\rho+1}^{2}
				\Delta_s(t)\Delta_{\rho,j}^2(x_1, x_2) \\
				& = & \displaystyle \Delta_2(t)\sum_{j=1}^{2}x_j^2+ \sum_{j=1}^{2}
				\Delta_1(t)\Delta_{1,j}^2(x_1, x_2) \\
				& = & \displaystyle \Delta_2(t)\sum_{j=1}^{2}x_j^2+
				\Delta_1(t)[(a_{22}x_1-a_{12}x_2)^2+(a_{21}x_1-a_{11}x_2)^2] \\
				& = & \displaystyle \Delta_2(t)\sum_{j=1}^{2}x_j^2+
				\Delta_1(t)[(\Gamma_1(t)x_1+bx_2)^2+(\gamma
				x_1+|\gamma_2(t)x_2)^2]
			\end{array}
		\end{equation}

		Mark $\Delta_{\rho,s}(x_1, x_2)$ can be found in reference~\cite{RN51},
		obviously
		\begin{equation}
			v(x_1, x_2) \geqslant \Delta_2(t)(x_1^2+x_2^2)
			=-\lambda_1\lambda_2(\lambda_1+\lambda_2)(x_1^2+x_2^2) \geqslant
			\delta^3(x_1^2+x_2^2)
		\end{equation}
		Therefore, the above function $v(x_1, x_2)$ is positive definite,
		so we can take the positive definite function $v(x_1, x_2)$  as
		the Lyapunov function of the system~\eqref{eq:linear system}.
		\begin{equation}
			\left(\frac{\mathrm{d}v}{\mathrm{d}t}\right)_{\eqref{eq:linear system}} =
			-2\Delta_1(t)\Delta_2(t)\sum_{j=1}^{2}x_j^2+
			\sum_{i,j=1}^{2}\dot{V}_{ij}(t)x_ix_j
		\end{equation}
		where
		\begin{equation}
			\sum_{i,j=1}^{2}\dot{V}_{ij}(t)x_ix_j \leqslant
			\sum_{i,j=1}^{2}|\dot{V}_{ij}(t)|x_ix_j \leqslant
			\frac{1}{2}\sum_{i,j=1}^{2}|\dot{V}_{ij}(t)|(x_i^2+x_j^2)
			=\sum_{i=1}^{2}\sum_{i=1}^{2}|\dot{V}_{ij}(t)|x_i^2
		\end{equation}
		Because $V_{ij}(t)=C(t)v_{ij}(t)$, so
		\begin{equation}
			\dot{V}_{ij}(t)=\dot{C}(t)v_{ij}(t)+C(t)\dot{v}_{ij}(t)=
			\Big(\sum_{i,j=1}^{2}\frac{\partial C}{\partial
			a_{ij}}\dot{a}_{ij}\Big)v_{ij}(t)+C(t)\dot{v}_{ij}(t)
		\end{equation}
		Let $|\dot{a}_{ij}(t)|\leqslant \varepsilon$, then
		\begin{equation}
			\label{eq:dotVijInequation}
			|\dot{V}_{ij}(t)|\leqslant
			\varepsilon\Big(\sum_{i,j=1}^{2}\frac{\partial C}{\partial
			a_{ij}}\Big)|v_{ij}(t)|+\varepsilon |C(t)|P_{ij}
		\end{equation}
		where $P_{ij}$ is the sum of the absolute values of the algebraic
		cofactors of the elements whose derivative is not zero in the
		determinant of $V_{ij}$ (if an element appears in the form of
		$a_{ii}+a_{jj}$, multiply by 2 before the absolute value of its
		algebraic cofactors), which is obtained from~\eqref{eq:dotVijInequation}
		\begin{equation}
			|\dot{V}_{ij}(t)|\leqslant \varepsilon D(t)|v_{ij}(t)|+\varepsilon
			|C(t)|P_{ij}(t).
		\end{equation}

		Among them $\sum_{i,j=1}^{2}\frac{\partial C}{\partial
		a_{ij}}=D(t)$, so
		\begin{equation}
			\left(\frac{\mathrm{d}v}{\mathrm{d}t}\right)_{\eqref{eq:linear system}} \leqslant
			-2\Delta_1(t)\Delta_2(t)\sum_{j=1}^{2}x_j^2+
			\varepsilon\Big\{\Big[D(t)Q_1(t)+|C(t)|P_1(t)\Big]x_1^2
			+D(t)Q_2(t)+|C(t)|P_2(t)\Big]x_2^2\Big\},
		\end{equation}
		where
		\begin{equation}
			Q_1(t)=\sum_{j=1}^{2}|V_{1j}(t)|,
			Q_2(t)=\sum_{j=1}^{2}|V_{2j}(t)|,
			P_1(t)=\sum_{j=1}^{2}P_{1j}(t),
			P_2(t)=\sum_{j=1}^{2}P_{2j}(t),
		\end{equation}

		Take
		\begin{equation}
			\varepsilon=\min\Big\{\frac{\Delta_1(t)\Delta_2(t)}{D(t)Q_1+|C(t)|P_1},
			\frac{\Delta_1(t)\Delta_2(t)}{D(t)Q_2+|C(t)|P_2} \Big\}>0
		\end{equation}
		so when $|\dot{a}_{ij}(t)|\leqslant \varepsilon$, there are
		\begin{equation}
			\left(\frac{\mathrm{d}v}{\mathrm{d}t}\right)_{\eqref{eq:linear system}} \leqslant
			-2\Delta_1(t)\Delta_2(t)\sum_{j=1}^{2} x_j^2
		\end{equation}

		By using the condition $\mbox{Re}\lambda(A(t))\leqslant -\delta<0$
		\begin{equation}
			\left(\frac{\mathrm{d}v}{\mathrm{d}t}\right)_{\eqref{eq:linear system}} \leqslant -K^*\sum_{j=1}^{2}x_j^2
		\end{equation}
		where $K^*$ is a constant independent of $t$, so $ \left(\frac{\mathrm{d}v}{\mathrm{d}t}\right)_{\eqref{eq:linear system}} $ is
		negative definite.

		Since $V(x_1, x_2)$) is positive definite and
		has an infinitesimal upper limit (obtained immediately from
		$|a_{ij}(t)|\leqslant a$), combined with the conclusion proved
		previously, the zero solution of the system~\eqref{eq:linear system} is
		asymptotically stable.
		Because the system~\eqref{eq:linear system} is
		linear, the stability has global properties.
	\end{proof}

	\section{Survival analysis of stochastic single population model}
	\label{sec:sde_sec}

	\subsection{Stochastic Model}
	\label{subsec:sde_model}

	Because the contact rate between a single population and
	environmental toxins is inevitably affected by other conditions,
	such as weather conditions, temperature and other nearby noises,
	it is natural to think that constant contact rate $\lambda_i$ is
	replaced by random variable $\lambda_i+\sigma_i\xi_i(t)$, where
	$\xi_i(t)$ is white noise, and $\xi_i(t)=\frac{dB_i(t)}{dt}$,
	$B_i(t)$ are two independent standard Brown motions defined on
	$(\Omega, \mathcal{F}, P)$.
	Therefore, we get random single
	species with psychological effects group model
	\begin{equation}
		\label{eq:SDEmodel}
		\left\{
		\begin{array}{lll}
			\vspace{1ex}
			\displaystyle \mathrm{d}x(t)=[by(t)-dx(t)-\gamma
			x(t)-c_1x^2(t)-\alpha_1 x(t)C_o(t)-\lambda_1 x(t)
			C_e(t)]\mathrm{d}t-\sigma_1 x(t)C_e(t)\mathrm{d}B_1(t),
			\\\vspace{1ex}
			\displaystyle \mathrm{d}y(t)=\Big[\gamma x(t)-c_2y^2(t)-\alpha_2
			y(t)C_o(t)-\frac{\lambda_2 y(t)C_e^p(t)}{1+\beta C_e^q(t)}\Big]-
			\frac{\sigma_2 y(t)C_e^p(t)}{1+\beta
			C_e^q(t)}\mathrm{d}B_2(t),
			\\\vspace{1ex}
			\displaystyle
			\mathrm{d}C_o(t)=[kC_e(t)-gC_o(t)-mC_o(t)-bC_o(t)]\mathrm{d}t,\\[3mm]
			\displaystyle \mathrm{d}C_e(t)=[u_e(t)-hC_e(t)]\mathrm{d}t.
		\end{array}
		\right.
	\end{equation}
	For the convenience of subsequent derivation, we propose the
	following lemmas:
	\begin{lem}
		\label{lem:unique_eep}
		Under the condition of non-pollution,
		the system~\eqref{eq:SDEmodel} has a unique equilibrium point $ (A, B)
		$ in the first quadrant.
	\end{lem}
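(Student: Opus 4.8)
The plan is to exploit that ``non-pollution'' means $C_e\equiv 0$ (equivalently, the exogenous input satisfies $u_e\equiv 0$ in the fourth equation of~\eqref{eq:SDEmodel}), which in turn forces $C_o\equiv 0$ through the third equation and \emph{simultaneously} annihilates both diffusion coefficients $\sigma_1 x C_e$ and $\sigma_2 y C_e^p/(1+\beta C_e^q)$. The stochastic system therefore collapses to the autonomous planar ODE
\begin{equation*}
\dot{x}=by-(d+\gamma)x-c_1x^2,\qquad \dot{y}=\gamma x-c_2y^2,
\end{equation*}
so a first-quadrant equilibrium $(A,B)$ is exactly a positive solution of the pair $bB-(d+\gamma)A-c_1A^2=0$ and $\gamma A-c_2B^2=0$.

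First I would eliminate one variable. The second equilibrium equation is linear in $A$, so I solve it as $A=\tfrac{c_2}{\gamma}B^2$ and substitute into the first. Dividing through by $B$ (legitimate because a first-quadrant equilibrium has $B>0$) reduces the two coupled relations to the single cubic
\begin{equation*}
g(B):=\frac{c_1c_2^2}{\gamma^2}B^3+\frac{(d+\gamma)c_2}{\gamma}B-b=0.
\end{equation*}
Every positive root of $g$ yields exactly one admissible pair via $A=\tfrac{c_2}{\gamma}B^2>0$, and conversely each first-quadrant equilibrium gives such a root, so the lemma is equivalent to showing $g$ has a unique positive zero.

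Next I would settle existence and uniqueness of that zero by a monotonicity argument. Since $b,c_1,c_2,d,\gamma$ are all positive, we have $g(0)=-b<0$ and $g(B)\to+\infty$ as $B\to\infty$, so the intermediate value theorem supplies a positive root. Uniqueness follows from
\begin{equation*}
g'(B)=3\frac{c_1c_2^2}{\gamma^2}B^2+\frac{(d+\gamma)c_2}{\gamma}>0\qquad(B\ge 0),
\end{equation*}
which makes $g$ strictly increasing on $[0,\infty)$ and hence able to cross zero at most once; equivalently, the coefficient sequence of $g$ has a single sign change, so Descartes' rule of signs permits exactly one positive root. Calling this root $B$ and setting $A=\tfrac{c_2}{\gamma}B^2$ produces the unique first-quadrant equilibrium $(A,B)$.

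I do not anticipate a genuinely hard step here; the only point that needs care is the reduction itself, namely checking that imposing $C_e\equiv 0$ is self-consistent (it forces $u_e\equiv 0$ and then $C_o\equiv 0$, which is what ``non-pollution'' is taken to mean) and that the division by $B$ is justified because the asserted equilibrium lies strictly inside the first quadrant rather than on its boundary. Everything remaining is the elementary sign analysis of a depressed cubic.
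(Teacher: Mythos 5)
Your proposal is correct and follows the same reduction as the paper: restrict to the pollution-free planar system, solve the second equilibrium equation for $x$ as $A=\tfrac{c_2}{\gamma}B^2$, and substitute to obtain a cubic in $B$ (your $g(B)=\tfrac{c_1c_2^2}{\gamma^2}B^3+\tfrac{(d+\gamma)c_2}{\gamma}B-b$ is exactly the paper's $\tilde a^2\tilde c\,y^3+\tilde a\tilde b\,y-1=0$ multiplied by $b$). Where you diverge is in how uniqueness is settled: the paper simply asserts that positivity of the parameters forces a single first-quadrant solution and then writes down an explicit Cardano-type expression for the root, whereas you prove existence via the intermediate value theorem ($g(0)=-b<0$, $g\to+\infty$) and uniqueness via $g'>0$ on $[0,\infty)$ (equivalently Descartes' rule with one sign change). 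Your route supplies the justification the paper omits and avoids the closed-form radical expression entirely, at the cost of not exhibiting $B$ explicitly; both are valid, and your monotonicity argument is the cleaner way to establish the lemma as stated.
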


	\begin{proof}
		Under the condition of non-pollution, the
		system~\eqref{eq:SDEmodel} is simplified as
		\begin{equation}
			\left\{
			\begin{array}{l}
				\dot{x}(t)=by(t)-dx(t)-\gamma x(t)-c_1x^2(t),
				\\
				\dot{y}(t)=\gamma x(t)-c_2y^2(t),
			\end{array}
			\right.
		\end{equation}

		The equilibrium point of the system is required, that is, to find
		the solution of the following equations
		\begin{equation}
			\label{eq:EquationGroup}
			\left\{
			\begin{array}{l}
				0=by(t)-dx(t)-\gamma x(t)-c_1x^2(t),
				\\
				0=\gamma x(t)-c_2y^2(t),
			\end{array}
			\right.
		\end{equation}

		Since the parameters of the system are all positive, there is only
		one solution in the first quadrant of the equation group, which is
		set as $ (A, B) $.
		For the convenience of writing, we write the
		equation group~\eqref{eq:EquationGroup} in the following form:
		\begin{equation}
			\label{eq:RewrittenEquationGroup}
			\left\{
			\begin{array}{l}
				x=\tilde{a}y^2,
				\\
				y=\tilde{b}x+\tilde{c}x^2,
			\end{array}
			\right.
		\end{equation}
		of which
		\begin{equation}
			\tilde{a}=\frac{c_2}{\gamma},
			\tilde{b}=\frac{d+\gamma}{b},
			\tilde{c}=\frac{c_1}{b}.
		\end{equation}
		The solution of equations~\eqref{eq:RewrittenEquationGroup} in the first quadrant is $ (A, B) $,
		where
		\begin{equation}
			A=\tilde{a}B^2, B=\Lambda-\frac{\tilde{b}}{3\tilde{a}\tilde{c}},
			\Lambda=\Big(\frac{\sqrt{27\tilde{c}+4\tilde{a}\tilde{b}^3}}{2\sqrt{27}\tilde{a}^2\tilde{c}^{\frac{3}{2}}}
			+\frac{1}{2\tilde{a}^2\tilde{c}}\Big) ^{\frac{1}{3}}
		\end{equation}
	\end{proof}

	\begin{lem}
		For models~\eqref{eq:ODEmodel} and~\eqref{eq:SDEmodel},
		if $k\leqslant g+m+b$, $u_e^*\leqslant h$, then,
		$0\leqslant C_o(t)<1, 0\leqslant C_e(t)<1$ for $t\in
		\mathbb{R}_+$.
	\end{lem}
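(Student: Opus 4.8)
The plan is to argue directly from the explicit representations in~\eqref{eq:CoCe}, treating $C_e$ first (its formula is self-contained) and then $C_o$ (whose convolution kernel involves $C_e$). I write $u_e^{*}=\sup_{t\geqslant 0}u_e(t)<\infty$, which is finite by the standing hypothesis that $u_e$ is a bounded non-negative function, and I take the biologically natural initial data $0\leqslant C_o(0)<1$ and $0\leqslant C_e(0)<1$. Non-negativity of both quantities is then immediate: in each formula every ingredient --- the input $u_e(s)\geqslant 0$, the state $C_e(s)\geqslant 0$, the exponential kernels, and the initial values --- is non-negative, so the integrals and the boundary terms are non-negative, whence $C_o(t),C_e(t)\geqslant 0$ for all $t\geqslant 0$.

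For the upper bound on $C_e$, I would estimate the convolution term by replacing $u_e(s)$ with $u_e^{*}$ and integrating the kernel explicitly:
\begin{equation}
    \int_{0}^{t}u_e(s)e^{-h(t-s)}\,\mathrm{d}s
    \leqslant u_e^{*}\cdot\frac{1-e^{-ht}}{h}
    \leqslant 1-e^{-ht},
\end{equation}
where the last step uses $u_e^{*}\leqslant h$. Adding the boundary term gives $C_e(t)\leqslant 1-e^{-ht}\bigl(1-C_e(0)\bigr)<1$, since $1-C_e(0)>0$ and $e^{-ht}>0$. With the bound $C_e(s)<1$ now in hand, the same scheme applies to $C_o$: estimating $C_e(s)$ by $1$ in its kernel yields
\begin{equation}
    k\int_{0}^{t}C_e(s)e^{-(g+m+b)(t-s)}\,\mathrm{d}s
    < k\cdot\frac{1-e^{-(g+m+b)t}}{g+m+b}
    \leqslant 1-e^{-(g+m+b)t},
\end{equation}
using $k\leqslant g+m+b$, so that $C_o(t)<1-e^{-(g+m+b)t}\bigl(1-C_o(0)\bigr)<1$.

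The argument is essentially a pair of Gr\"onwall-type estimates, so I do not expect a deep obstacle; the points requiring care are the \emph{order} of the estimates (the bound on $C_o$ genuinely requires the bound on $C_e$ first, since $C_e$ drives $C_o$) and the \emph{strictness} of the final inequalities. Strictness for $C_o$ is in fact automatic once $C_e<1$ is established, but strictness for $C_e$ hinges on the initial datum lying below $1$: if $C_e(0)=1$ and $u_e^{*}=h$ were allowed, the representation with $u_e\equiv h$ would give $C_e(t)\equiv 1$ and only the non-strict bound would survive. I would therefore record the hypotheses $C_o(0),C_e(0)\in[0,1)$ explicitly, or alternatively prove the non-strict bounds $C_o(t),C_e(t)\leqslant 1$ and note that strictness propagates from $t=0$ through the positivity of the exponential kernels.
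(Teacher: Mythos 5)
Your proposal is correct and follows essentially the same route as the paper: both arguments start from the explicit variation-of-constants formulas in~\eqref{eq:CoCe}, bound the convolution integrals by $1-e^{-ht}$ and $1-e^{-(g+m+b)t}$ using $u_e^*\leqslant h$ and $k\leqslant g+m+b$ respectively, and conclude strict inequality from the initial data lying in $[0,1)$. The paper merely states these estimates without the intermediate computations, so your write-up is a more detailed version of the same proof, with the added (and correct) observations about the order of the two estimates and the role of the initial conditions.
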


	\begin{proof}
		The explicit solutions $C_o(t), C_e(t)$
		are as follows:
		\[
			\begin{array}{ll}
				\vspace{1ex}
				\displaystyle C_o(t)=k\int_{0}^{t}C_e(s)e^{-(g+m+b)(t-s)}\mathrm{d}s+C_o(0)e^{-(g+m+b)t},\\
				\vspace{1ex}
				\displaystyle C_e(t)=\int_{0}^{t}u_e(s)e^{-h(t-s)}\mathrm{d}s+C_e(0)e^{-ht}.
			\end{array}
		\]
		for the given initial conditions $0\leqslant C_o(0)<1, 0\leqslant
		C_e(0)<1$, the explicit solutions can be estimated
		\[
			\begin{array}{ll}
				\vspace{1ex}
				\displaystyle C_o(t)\leqslant 1-e^{-(g+m+b)t}+C_o(0)e^{-(g+m+b)t}<1, \\
				\vspace{1ex}
				\displaystyle C_e(t)\leqslant 1-e^{-ht}+C_e(0)e^{-ht}<1.
			\end{array}
		\]
		So $0\leqslant C_o(t)<1$ and $0\leqslant C_e(t)<1$ hold.
	\end{proof}

	\subsection{Existence and uniqueness of a global positive solution}
	\label{subsec:sde_global_positive_solution}

	\begin{thm}
		For any initial value $\left(x(0),
		y(0), C_{o}(0), C_{e}(0)\right) \in \mathbb{R}_{+}^{4}$, there is
		a unique solution of model~\eqref{eq:SDEmodel}, and the solution does
		not leave $\mathbb{R}_{+}^{4}$ according to probability 1.
	\end{thm}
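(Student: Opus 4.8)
The plan is to use the classical Khasminskii--Mao localization argument for global positivity of SDE solutions. The drift and diffusion coefficients of model~\eqref{eq:SDEmodel} are locally Lipschitz on $\mathbb{R}_+^4$: apart from polynomials in the state variables, the only nonlinearity is the rational factor $C_e^p/(1+\beta C_e^q)$, which is smooth for $C_e\geqslant 0$. Hence the standard existence--uniqueness theorem for SDEs provides, for every initial value in $\mathbb{R}_+^4$, a unique maximal local solution on $[0,\tau_e)$, where $\tau_e$ is the explosion time, and it suffices to prove $\tau_e=\infty$ almost surely. I would first dispose of the toxin coordinates: the equations for $C_o$ and $C_e$ are deterministic and decoupled, their explicit solutions in~\eqref{eq:CoCe} are strictly positive whenever $C_o(0),C_e(0)>0$ and $u_e\geqslant 0$, and the preceding lemma gives $0\leqslant C_o(t),C_e(t)<1$. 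Thus $C_o$ and $C_e$ can neither reach $0$ nor blow up, and the only coordinates that can leave $(0,\infty)$ are $x$ and $y$.

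Fix $n_0$ so large that $x(0),y(0)\in[1/n_0,n_0]$ and, for $n\geqslant n_0$, define the stopping times
\[
\tau_n=\inf\Big\{t\in[0,\tau_e):\ \min\{x(t),y(t)\}\leqslant \tfrac1n\ \text{or}\ \max\{x(t),y(t)\}\geqslant n\Big\},
\]
which increase to a limit $\tau_\infty\leqslant\tau_e$; the goal becomes $\tau_\infty=\infty$ a.s. The core is a Lyapunov estimate. With the nonnegative $C^2$ function $V(x,y)=(x-1-\log x)+(y-1-\log y)$ and $L$ the generator of~\eqref{eq:SDEmodel}, It\^o's formula gives
\[
LV=\Big(1-\tfrac1x\Big)\big[by-dx-\gamma x-c_1x^2-\alpha_1 xC_o-\lambda_1 xC_e\big]
+\Big(1-\tfrac1y\Big)\Big[\gamma x-c_2y^2-\alpha_2 yC_o-\tfrac{\lambda_2 yC_e^p}{1+\beta C_e^q}\Big]
+\tfrac12\sigma_1^2C_e^2+\tfrac12\Big(\tfrac{\sigma_2 C_e^p}{1+\beta C_e^q}\Big)^2.
\]
Since $0\leqslant C_o,C_e<1$, every toxin-dependent term is uniformly bounded, while the logistic terms supply the negative quadratics $-c_1x^2$ and $-c_2y^2$, which dominate the linear and cross-coupling contributions for large arguments. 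Completing the square then yields a constant $K>0$, independent of $t$ and $n$, with $LV\leqslant K$ throughout the state space.

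The argument then closes in the usual way. Applying Dynkin's formula to $V$ up to $\tau_n\wedge T$ and using $LV\leqslant K$ gives $\mathbb{E}\,V\big(x(\tau_n\wedge T),y(\tau_n\wedge T)\big)\leqslant V(x(0),y(0))+KT$. On the event $\{\tau_n\leqslant T\}$ at least one of $x(\tau_n),y(\tau_n)$ equals $n$ or $1/n$, so there $V$ is at least $a_n:=\min\{n-1-\log n,\ \tfrac1n-1+\log n\}$, which diverges as $n\to\infty$. If $P(\tau_\infty\leqslant T)>0$ for some finite $T$, then $P(\tau_n\leqslant T)\geqslant\varepsilon$ for some $\varepsilon>0$ and all large $n$, whence $\varepsilon\,a_n\leqslant V(x(0),y(0))+KT$, an impossibility. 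Therefore $\tau_\infty=\infty$ almost surely, so $\tau_e=\infty$ and the solution remains in $\mathbb{R}_+^4$ with probability one.

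The step I expect to be the main obstacle is the uniform bound $LV\leqslant K$. One must verify that the sign-indefinite terms --- chiefly the $+by$ entering the $x$-component and the $+\gamma x$ entering the $y$-component, each magnified by a $-1/x$ or $-1/y$ factor --- are genuinely absorbed by $-c_1x^2$ and $-c_2y^2$ rather than merely by the linear damping. That the toxin coordinates are already confined to $[0,1)$ by the preceding lemma is exactly what makes this bound uniform in $t$ despite the time-dependent input $u_e(t)$.
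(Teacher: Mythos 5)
Your argument is correct and follows essentially the same route as the paper: a Khasminskii--Mao localization with the coordinatewise Lyapunov function $u-1-\ln u$, the uniform bound $LV\leqslant K$ obtained by completing the square in $x$ and $y$ after using the boundedness of $C_o$ and $C_e$, and the standard contradiction via Dynkin's formula on $\{\tau_n\leqslant T\}$. The only (harmless, slightly cleaner) deviation is that you exploit the deterministic, explicitly solvable $C_o,C_e$ equations to restrict the stopping times and the Lyapunov function to the $(x,y)$ coordinates, whereas the paper carries all four components through both.
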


	\begin{proof}
		Our proof refers to the work of Mao~\cite{RN52}.
		By Lyapunov method mentioned in Lemma 2 of~\cite{RN53}, and the
		references~\cite{RN54,RN55}, the coefficients of model~\eqref{eq:SDEmodel}
		are locally Lipschitz continuous.
		On $\left[0, \tau_{e}\right)$,
		for any initial value $\left(x(0), y(0), C_{o}(0), C_{e}(0)\right)
		\in \mathbb{R}_{+}^{4}$, there is a local solution $\left(x(t),
		y(t), C_{o}(t), C_{e}(t)\right)$, where $ \tau_{e} $ is the blow
		up time.
		In order to prove that the solution is global, we need to
		prove that $ \tau_{e}=\infty $ holds almost surely.
		Let $m_{0}>1$
		be large enough, for $ m>m_{0} $, each component of the initial
		value $(x(0), y(0), C_{0}(0), C_{e}(0)) \in \mathbb{R}_{+}^{4}$ is
		in the interval $[\frac{1}{m_{0}}, m_{0}]$.
		Define the stop time
		\begin{equation}
			\begin{array}{lll}
				\vspace{1ex}
				\tau_{m} & = & \inf \left\{ t \in [ 0, \tau_{e} ):\min \left\{x(t), y(t), C_{o}(t), C_{e}(t) \right\} \leqslant \displaystyle \frac{1}{m} \right.
				\\\vspace{1ex}
				& & \left. \mbox{or }  \max \left\{x(t), y(t), C_{o}(t), C_{e}(t)\right\} \geqslant m \right\}.
			\end{array}
		\end{equation}

		In this paper, we set $\inf \phi=\infty$.
		Obviously, when $m
		\rightarrow \infty$, $\tau_{m}$ increases and $\tau_{m}<\tau_{e}$.
		We denote  $\tau_{\infty}=\lim _{m \rightarrow \infty} \tau_{m}$.
		We claim that for all $t \geqslant 0$, $\left(x(t), y(t),
		C_{o}(t), C_{e}(t)\right) \in \mathbb{R}_{+}^{4}$, and
		$\tau_{\infty}=\infty$.
		Use the counter evidence method below.
		Otherwise, for any $m>m_{0}$, there is a pair of constants $T>0$,
		$\varepsilon \in (0,1)$ such that $\mathbb{P}\left\{\tau_{m}
		\leqslant T\right\} \geqslant \varepsilon$.
		We define a
		$C^{2}$-function as follows: $V_{1}: \mathbb{R}_{+}^{4}
		\rightarrow \mathbb{R}_{+}$, and let $ x=x(t) $, $ y=y(t) $, $
		C_o=C_o(t) $, $ C_e=C_e(t) $.
		\[V_{1}(t)= (x-1-\ln x)+(y-1-\ln y)+(C_{o}-1-\ln C_{o})+(C_{e}-1-\ln C_{e}).\]

		For all $t \in\left[0, \tau_{e}\right)$, according to Itô's
		formula, together with $ V_{1}(t) \coloneqq V_{1}\left(x(t), y(t),
		C_{o}(t), C_{e}(t)\right) $
		\begin{equation}
			\label{eq:dV1}
			\begin{array}{lll}
				\mathrm{d} V_{1}(t)
				& = & \displaystyle \left(1-\frac{1}{x}\right) \mathrm{d} x+\left(1-\frac{1}{y}\right) \mathrm{d} y+\left(1-\frac{1}{C_{o}}\right) \mathrm{d} C_{o}+\left(1-\frac{1}{C_{e}}\right) \mathrm{d} C_{e}
				\\[12pt]
				& & +\displaystyle \frac{1}{2 x^{2}}(\mathrm{d} x)^{2}+\frac{1}{2 y^{2}}(\mathrm{d} y)^{2}
				\\[12pt]
				& = & \displaystyle \mathcal{L} V_{1} \mathrm{d} t-(x-1) \sigma_{1} C_{e} \mathrm{d} B_{1}(t)
				-(y-1) \frac{\sigma_{2} C_{e}^{p}}{1+\beta C_{e}^{q}} \mathrm{d}
				B_{2}(t)
			\end{array}
		\end{equation}
		where
		\begin{equation}
			\label{eq:LV1}
			\begin{array}{lll}
				\mathcal{L} V_{1}(t)
				& = & (x-1)\left(b \frac{y}{x}-d-\gamma-c_{1} x-\alpha_{1} C_{0}(t)-\lambda_{1} C_{e}\right)+\frac{1}{2} \sigma_{1}^{2} C_{e}^{2}
				\\[12pt]
				& & \displaystyle +(y-1)\left(\gamma \frac{x}{y}-c_{2} y-\alpha_{2} C_o-\frac{\lambda_{2} C_{e}^{p}}{1+\beta C_{e}^{q}}\right)+\frac{\sigma_{2} C_{e}^{2 p}}{2\left(1+\beta C_{e}^{q}\right)^{2}}
				\\[12pt]
				& & \displaystyle +k C_{e}-(g+m+b) C_o-k \frac{C_{e}}{C_o}+g+m+b+u_{e}-h C_{e}-\frac{u_{e}}{C_{e}}+h
				\\[12pt]
				& < & \displaystyle \left(c_{1}-d\right) x-c_{1} x^{2}+\left(b+c_{2}\right) y-c_{2} y^{2} +\left(\alpha_{1}+\alpha_{2}\right)+\left(\lambda_{1}+\lambda_{2}\right)+\frac{\sigma_{1}^{2}+\sigma_{2}^{2}}{2}
				\\[12pt]
				& & +\gamma+k+g+m+b+d+\left(u_{e}\right)^{*}+h
				\\[12pt]
				& \leqslant & \displaystyle \frac{\left(c_{1}-d\right)^{2}}{4 c_{1}}+\frac{\left(b+c_{2}\right)^{2}}{4 c_{2}}+\left(\alpha_{1}+\alpha_{2}\right)+\left(\lambda_{1}+\lambda_{2}\right)+\frac{\sigma_{1}^{2}+\sigma_{2}^{2}}{2}
				\\[12pt]
				& & +\gamma+k+g+m+b+d+\left(u_{e}\right)^{*}+h
				\\[12pt]
				& \coloneqq & M>0.
			\end{array}
		\end{equation}

		Hence it may be concluded that
		\begin{equation}
			\label{eq:dV1inequation}
			\mathrm{d} V_{1}(t)<M \mathrm{d} t-(x-1) \sigma_{1} C_{e} \mathrm{d} B_{1}(t)
			-(y-1) \frac{\sigma_{2} C_{e}^{p}}{1+\beta C_{e}^{q}} \mathrm{d} B_{2}(t).
		\end{equation}

		Integrate the two sides of~\eqref{eq:dV1inequation} from 0 to $\tau_{m} \wedge T$, and then take the expectation to obtain the following inequality
		\begin{equation}
			\label{eq:EV1}
			\begin{array}{lll}
				\displaystyle \mathbb{E}\left(V_{1}(\tau_{m} \wedge T\right)
				& < {} & \displaystyle V_{1}(0)+\mathbb{E} \int_{0}^{\tau_{m} \wedge T} M \mathrm{d} s
				\\[12pt]
				& & \displaystyle -\mathbb{E} \int_{0}^{\tau_{m} \wedge T}(x(s)-1) \sigma_{1} C_{e}(s) \mathrm{d} B_{1}(s)
				\\[12pt]
				& & \displaystyle -\mathbb{E} \int_{0}^{\tau_{m} \wedge T}(y(s)-1) \frac{\sigma_{2} C_{e}^{p}(s)}{1+\beta C_{e}^{q}(s)} \mathrm{d} B_{2}(s)
				\\[12pt]
				& \leqslant {} & V_{1}(0)+M T.
			\end{array}
		\end{equation}

		Note that $\Omega_{m}=\left\{\tau_{m} \wedge T\right\}$, then
		$\mathbb{P}\left\{\Omega_{m}\right\} \geqslant \varepsilon$.
		For
		any $w \in \Omega_{m}$, by the definition of stop time, there is
		\begin{equation}
			x\left(\tau_{m}, \omega\right) \wedge y\left(\tau_{m}, \omega\right) \wedge C_{o}\left(\tau_{m}, \omega\right) \wedge C_{e}\left(\tau_{m}, \omega\right)=\frac{1}{m},
		\end{equation}
		or
		\begin{equation}
			x\left(\tau_{m}, \omega\right) \vee y\left(\tau_{m}, \omega\right) \vee C_{o}\left(\tau_{m}, \omega\right) \vee C_{e}\left(\tau_{m}, \omega\right)=m.
		\end{equation}

		Therefore, there are
		\begin{equation}
			V_{1}\left(\tau_{m}, \omega\right) \geqslant \min \left\{m-1-\ln m, \frac{1}{m}-1+\ln m\right\}
		\end{equation}

		Then inequality~\eqref{eq:EV1} is equivalent to
		\begin{equation}
			\begin{array}{lll}
				V_{1}(0)+M T
				& \geqslant {} & \mathbb{E}\left[1_{\Omega_{m}}(\omega) \cdot V_{1}\left(\tau_{m} \wedge T\right)\right]
				\\[12pt]
				& \geqslant {} & \mathbb{P}\left\{\Omega_{m}\right\} \cdot \min \left\{m-1-\ln m, \frac{1}{m}-1+\ln m\right\}
				\\[12pt]
				& \geqslant {} & \varepsilon \min \left\{m-1-\ln m, \frac{1}{m}-1+\ln m\right\}.
			\end{array}
		\end{equation}
		where $1_{\Omega_{m}}(\omega)$ is the demonstrative function of $\Omega_{m}$.
		Let $m \rightarrow \infty$, we can get
		\begin{equation}
			+\infty>V_{1}(0)+M T \geqslant+\infty.
		\end{equation}

		The contradiction arises, therefore,
		$\mathbb{P}\left\{\tau_{m}=\infty\right\}=1$.
	\end{proof}

	\subsection{Weakly persistent in the mean}
	\label{subsec:weakly_persistent}

	\begin{thm}
		\label{thm:weakly_persistent}
		Let $X(0)=\left(x(0), y(0), C_{o}(0),
		C_{e}(0)\right)$ is the initial value, $ X(t) $ be a solution of
		model~\eqref{eq:SDEmodel}.
		If the coefficients of~\eqref{eq:SDEmodel}
		satisfy
		\begin{equation}
			\label{eq:conTh3}
			\begin{array}{lll}
				\vspace{1ex}
				\displaystyle \sigma_{1}^{2}<\frac{b B}{A}+ c_{1}
				A-\frac{b}{2}-\frac{\gamma}{2},
				&
				\displaystyle \sigma_{2}^{2}<\frac{\gamma A}{B}+ c_{2}B-\frac{b}{2}-\frac{\gamma}{2},
				\\\vspace{1ex}
				\displaystyle k<2(g+m+b),
				&
				\displaystyle \sigma_{1}^{2} A^{2}+\sigma_{2}^{2} B^{2}<2 h-k-1.
			\end{array}
		\end{equation}
		Then the solution $ X(t) $ of~\eqref{eq:SDEmodel} has the properties:
		\begin{equation}
			\limsup _{t \rightarrow \infty} \frac{1}{t} \mathbb{E}
			\int_{0}^{t}\left(\eta_{1}(x(s)-A)^{2}+\eta_{2}(y(s)-B)^{2}+\eta_{3} C_{o}^{2}(s)+\eta_{4} C_{e}^{2}(s)\right) \mathrm{d} s \leqslant
			K.
		\end{equation}
		In other words, in the long run, the single population near the
		pollution-free equilibrium point $(A, B, 0, 0)$ is weak persistent
		in the mean.
	\end{thm}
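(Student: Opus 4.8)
The plan is to measure the squared deviation of the state from the pollution-free equilibrium $(A,B,0,0)$ of Lemma~\ref{lem:unique_eep} by a single quadratic Lyapunov functional and to show that its generator is negative definite up to an additive constant. I would take
\[
V(x,y,C_o,C_e)=\tfrac12(x-A)^2+\tfrac12(y-B)^2+\tfrac12 C_o^2+\tfrac12 C_e^2,
\]
which is $C^2$, nonnegative, and vanishes only at the equilibrium. Itô's formula along \eqref{eq:SDEmodel} gives $\mathrm{d}V=\mathcal{L}V\,\mathrm{d}t+\mathrm{d}M$, where $M$ is the local martingale with differential $-(x-A)\sigma_1 xC_e\,\mathrm{d}B_1-(y-B)\frac{\sigma_2 yC_e^p}{1+\beta C_e^q}\,\mathrm{d}B_2$, and $\mathcal{L}V$ gathers the drift together with the two Itô corrections $\tfrac12\sigma_1^2x^2C_e^2$ and $\tfrac12\sigma_2^2y^2C_e^{2p}/(1+\beta C_e^q)^2$.

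The decisive step is the algebraic reduction of $\mathcal{L}V$. Substituting $x=(x-A)+A$ and $y=(y-B)+B$ into the drift and using the equilibrium identities $bB=(d+\gamma)A+c_1A^2$ and $\gamma A=c_2B^2$ from the proof of Lemma~\ref{lem:unique_eep}, the constant terms cancel and the first population equation contributes exactly $-\big(\tfrac{bB}{A}+c_1A\big)(x-A)^2$ together with the cubic remainder $-c_1(x-A)^3$ and the cross-terms $b(x-A)(y-B)$, $-\alpha_1x(x-A)C_o$, $-\lambda_1x(x-A)C_e$; symmetrically the second equation yields $-\big(\tfrac{\gamma A}{B}+c_2B\big)(y-B)^2$ plus $-c_2(y-B)^3$ and analogous couplings, while the toxin equations give $-(g+m+b)C_o^2+kC_oC_e-hC_e^2+u_eC_e$. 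The two leading coefficients are precisely the quantities appearing in \eqref{eq:conTh3}. The cubic remainders are sign-favorable for $x,y\ge A$ and, since $x,y\ge1$, are bounded above by the constants $c_1(A-1)^3$ and $c_2(B-1)^3$, so they are absorbed into the additive constant.

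Next I would control the remaining indefinite terms using the boundedness lemma ($0\le C_o,C_e<1$) together with $1\le p\le q$, which gives $C_e^{2p}/(1+\beta C_e^q)^2\le C_e^2$. Splitting $x^2\le 2(x-A)^2+2A^2$ and $y^2\le 2(y-B)^2+2B^2$ in the Itô corrections routes $\sigma_1^2(x-A)^2$ and $\sigma_2^2(y-B)^2$ into the population budgets while leaving $\sigma_1^2A^2C_e^2+\sigma_2^2B^2C_e^2$ on the $C_e^2$ budget; a single Young's inequality applied to each of $b(x-A)(y-B)$, $\gamma(x-A)(y-B)$, $kC_oC_e$ and the population--toxin couplings then distributes the cross-terms among the four squared quantities and a constant. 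The four inequalities of \eqref{eq:conTh3} are exactly the sign requirements that result: the first two force the coefficients of $(x-A)^2$ and $(y-B)^2$ strictly negative (the $-\tfrac{b}{2}-\tfrac{\gamma}{2}$ being the coupling cost and the $\sigma_i^2$ the Itô cost), $k<2(g+m+b)$ forces the $C_o^2$ coefficient negative after absorbing $kC_oC_e$, and $\sigma_1^2A^2+\sigma_2^2B^2<2h-k-1$ forces the $C_e^2$ coefficient negative, its left side being the residual noise left on that budget. This yields $\mathcal{L}V\le -\sum_{i}\eta_i(\cdot)^2+K$ with all $\eta_i>0$.

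Finally I would integrate over $[0,\tau_m\wedge t]$, with $\tau_m$ the exit times of the global-existence theorem, take expectations so that the stopped martingale drops out, and discard the nonnegative $\mathbb{E}V(\tau_m\wedge t)$ to obtain $\mathbb{E}\int_0^{\tau_m\wedge t}\sum_i\eta_i(\cdot)^2\,\mathrm{d}s\le V(0)+Kt$. Since $\tau_m\uparrow\infty$ almost surely and the integrand is nonnegative, monotone convergence removes the stopping, and dividing by $t$ and taking $\limsup$ delivers the claim with $K$ as in the statement. The main obstacle is the bookkeeping of the middle two paragraphs: one must choose the Young weights so that the leading quadratic coefficients collapse exactly to \eqref{eq:conTh3} while every super-quadratic or cross remainder is either sign-favorable or a bounded constant. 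The localization is the secondary subtlety, since the martingale property holds only on the stopped interval, so the passage $m\to\infty$ must precede dividing by $t$.
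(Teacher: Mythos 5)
Your proposal follows essentially the same route as the paper: a quadratic Lyapunov function centered at the pollution-free equilibrium $(A,B,0,0)$ (the paper uses $V_2=u_1^2+u_2^2+v^2+w^2$ with $u_1=x-A$, $u_2=y-B$), Itô's formula, the elementary inequality $2xy\leqslant x^2+y^2$ to distribute the cross terms, and integration plus expectation to get the time-averaged bound, with the four conditions of \eqref{eq:conTh3} emerging as exactly the positivity requirements on the resulting coefficients $\eta_i$. Your treatment is in fact slightly more careful than the paper's on two points it glosses over --- the sign of the cubic remainders $-c_i(\cdot)^3$ when $x<A$ or $y<B$, and the localization by stopping times needed before the stochastic integral can be discarded in expectation --- but these are refinements of the same argument, not a different one.
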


	\begin{proof}
		We set new variables
		\begin{equation}
			u_{1}(t)=x(t)-A,\quad u_{2}(t)=y(t)-B,\quad v(t)=C_{o}(t),\quad
			w(t)=C_{e}(t),
		\end{equation}
		where the expressions of $ A $ and $ B $ are given in Lemma~\ref{lem:unique_eep}.
		Therefore, the model~\eqref{eq:SDEmodel} is rewritten as follows:
		\begin{equation}
			\left\{
			\begin{array}{l}
				\vspace{1ex}
				\displaystyle \mathrm{d} u_{1}=\left(u_{1}+A\right)\left(\frac{b\left(u_{2}+B\right)}{u_{1}+A}-\frac{b B}{A}-c_{1} u_{1}-\alpha_{1} v-\lambda_{1} w\right)
				\mathrm{d} t-\left(u_{1}+A\right) \sigma_{1} w \mathrm{d} B_{1}(t) ,
				\\\vspace{1ex}
				\displaystyle \mathrm{d}u_{2}=\left(u_{2}+B\right)\left(\frac{\gamma\left(u_{1}+A\right)}{u_{2}+B}
				-\frac{\gamma A}{B}-c_{2} u_{2}-\alpha_{2} v-\frac{\lambda_{2} w^{p}}{1+\beta w^{q}}\right) \mathrm{d} t
				-\left(u_{2}+B\right) \frac{\sigma_{2} w^{p}}{1+\beta w^{q}} \mathrm{d}
				B_{2}(t),
				\\\vspace{1ex}
				\displaystyle \mathrm{d} v=[k u-(g+m+b) v] \mathrm{d} t ,
				\\\vspace{1ex}
				\displaystyle \mathrm{d} w=\left[u_{e}-h w\right] \mathrm{d} t.
			\end{array}
			\right.
		\end{equation}

		We define a $C^{2}$-function
		$V_{2}=u_{1}^{2}+u_{2}^{2}+v^{2}+w^{2}.
		$ From It\^{o}'s formula
		\begin{equation}
			\label{eq:dV2}
			\mathrm{d} V_{2}
			= \mathcal{L} V_{2}\mathrm{d} t
			-2\left(u_{1}+A\right) \sigma_{1} u_{1} w \mathrm{d} B_{1}(t)
			-2\left(u_{2}+B\right) \frac{\sigma_{2} u_{2} w^{p}}{1+\beta w^{q}} \mathrm{d} B_{2}(t) ,
		\end{equation}
		by the elementary equality $2xy\leqslant x^2+y^2$ for $x>0$ and
		$y>0$, we get
		\begin{equation}
			\label{eq:LV2}
			\begin{array}{lll}
				\mathcal{L} V_{2}
				& = & 2 u_{1}\left(u_{1}+A\right)\left(\frac{b\left(u_{2}+B\right)}{u_{1}+A}-\frac{b B}{A}-c_{1} u_{1}-\alpha_{1} v-\lambda_{1} w\right)+\left(u_{1}+A\right)^{2} \sigma_{1}^{2} w^{2}
				\\[12pt]
				& & +2 u_{2}\left(u_{2}+B\right)\left(\frac{\gamma\left(u_{1}+A\right)}{u_{2}+B}-\frac{\gamma A}{B}-c_{2} u_{2}-\alpha_{2} v-\frac{\lambda_{2} w^{p}}{1+\beta w^{q}}\right)
				\\[12pt]
				& & +\left(u_{2}+B\right)^{2} \frac{\sigma_{2}^{2} w^{2 p}}{\left(1+\beta w^{q}\right)^{2}}+2 v[k w-(g+m+b) v]+2 w\left(u_{e}-h w\right)
				\\[12pt]
				& < & 2 b u_{1} u_{2}-\frac{2 b B}{A} u_{1}^{2}+2 \gamma u_{1} u_{2}-\frac{2 \gamma A}{B} u_{2}^{2}
				-2 c_{1} A u_{1}^{2}+\sigma_{1}^{2} u_{1}^{2}\\[12pt]
				& & +2 \sigma_{1}^{2} A u_{1} w^{2}+\sigma_{1}^{2} A^{2} w^{2}-2 c_{2} B u_{2}^{2}+\sigma_{2}^{2} u_{2}^{2}+2 \sigma_{2}^{2} B u_{2} w^{2}+\sigma_{2}^{2} B^{2} w^{2}
				\\[12pt]
				& & +k w^{2}+k v^{2}-2(g+m+b) v^{2}+w^{2}+u_{e}^{2}-2 h w^{2}
				\\[12pt]
				& \leqslant &
				-\eta_{1} u_{1}^{2}-\eta_{2} u_{2}^{2}-\eta_{3} v^{2}-\eta_{4}
				w^{2}+K,
			\end{array}
		\end{equation}
		where
		\begin{equation}
			\begin{array}{lll}
				\eta_{1}=\frac{2 b B}{A}+2 c_{1} A-2\sigma_{1}^{2}-b-\gamma,
				&
				\eta_{2}=\frac{2 \gamma A}{B}+2 c_{2}
				B-2\sigma_{2}^{2}-b-\gamma,
				\\[3mm]
				\eta_{3}=2(g+m+b)-k,
				&
				\eta_{4}=2 h-\sigma_{1}^{2} A^{2}-\sigma_{1}^{2} B^{2}-k-1,
				\\[3mm]
				K=\sigma_{1}^{2} A^{2}+\sigma_{2}^{2} B^{2}+\left(u_{e}^{*}\right)^{2}.
			\end{array}
		\end{equation}
		Integrating from $ 0 $ to $ t $ on both sides of~\eqref{eq:dV2} and
		taking the expectation
		\begin{equation}
			\label{eq:EV2}
			\mathbb{E} \int_{0}^{t}\left(\eta_{1} u_{1}^{2}(s)+\eta_{2} u_{2}^{2}(s)+\eta_{3} v^{2}(s)+\eta_{4} w^{2}(s)\right) \mathrm{d} s
			<\mathbb{E}(V_{2}(0)-V_{2}(t))+K t \leqslant K t.
		\end{equation}
		Further, we obtain
		\begin{equation}
			\limsup _{t \rightarrow \infty} \frac{1}{t} \mathbb{E}
			\int_{0}^{t}\left(\eta_{1} u_{1}^{2}(s)+\eta_{2} u_{2}^{2}(s)+\eta_{3} v^{2}(s)+\eta_{4} w^{2}(s)\right) \mathrm{d} s \leqslant K.
		\end{equation}
	\end{proof}

	\subsection{Stochastic permanence}
	\label{subsec:stochastic_permanence}

	\begin{thm}
		\label{thm:stochastic_permanence}
		If the coefficients of the model satisfy
		\begin{equation}
			\label{eq:stochastic_permanence}
			\limsup _{t \rightarrow \infty}\left(\left(\alpha_{1}+\alpha_{2}\right) C_{o}(t)+\lambda_{1} C_{e}(t)+\frac{\lambda_{2} C_{e}^{p}(t)}{1+\beta C_{e}^{q}(t)}+3 \sigma^{2} C_{e}^{2}(t)\right)
			\leqslant 4 \sqrt{b \gamma}-\frac{\eta}{\theta}-\frac{c_{1}^{2}+c_{2}^{2}}{2}-d-\gamma,
		\end{equation}
		where $\sigma=\max \{\sigma_{1}, \sigma_{2}\}$, then the total
		number of the juveniles and the adults is stochastic permanence.
	\end{thm}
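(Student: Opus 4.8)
The conclusion to be reached is \emph{stochastic permanence} of the total abundance $N(t):=x(t)+y(t)$, i.e.\ that for every $\epsilon\in(0,1)$ one can produce constants $\delta>0$ and $H>0$, independent of the initial data, with
\[
\liminf_{t\to\infty}\mathbb{P}\{N(t)\geqslant\delta\}\geqslant 1-\epsilon,\qquad
\liminf_{t\to\infty}\mathbb{P}\{N(t)\leqslant H\}\geqslant 1-\epsilon .
\]
Following the Lyapunov/moment method of Mao~\cite{RN52}, the plan is to reduce both tails to uniform moment bounds and then apply Chebyshev's inequality: it suffices to exhibit a power $\theta>0$ together with constants $K_1,K_2$ such that $\limsup_{t\to\infty}\mathbb{E}[N^{\theta}(t)]\leqslant K_1$ (which controls the upper tail) and $\limsup_{t\to\infty}\mathbb{E}[N^{-\theta}(t)]\leqslant K_2$ (which controls the lower tail). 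The hypothesis~\eqref{eq:stochastic_permanence} is exactly what is needed to secure the second, harder, estimate.

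The upper bound is the routine half. Taking $V=(1+N)^{\theta}$ and applying It\^o's formula, the self-limitation terms $-c_1x^2-c_2y^2$ dominate the drift for large $N$, while the toxin couplings stay bounded because $C_o(t),C_e(t)<1$ by the bound established above; this yields $\mathcal{L}V\leqslant C_0-\kappa V$ for some $\kappa>0$, whence $\mathbb{E}[V(t)]$ stays bounded by a Gronwall/exponential comparison and $\limsup_t\mathbb{E}[N^{\theta}(t)]<\infty$ follows.

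The lower bound is the crux, and the main obstacle is choosing a Lyapunov function that \emph{retains the inter-stage coupling}: because the maturation flux $\gamma x$ cancels in $\dot N$, any function of $N$ alone is blind to $\gamma$ and cannot generate the $\sqrt{b\gamma}$ appearing in~\eqref{eq:stochastic_permanence}. I would therefore keep the two stages separate, working with a boundary-blow-up function such as $U=\left(\tfrac1x+\tfrac1y\right)^{\theta}$. Applying It\^o's formula, $\mathcal{L}U$ inherits the reciprocal birth/maturation terms $b\,y/x$ and $\gamma\,x/y$, and the elementary inequality $b\,y/x+\gamma\,x/y\geqslant 2\sqrt{b\gamma}$ injects the reserve $\sqrt{b\gamma}$; combined with the comparison $\tfrac1x+\tfrac1y\geqslant 4/N$, this is what I expect to produce the constant $4\sqrt{b\gamma}$ and to tie $U$ back to $\mathbb{E}[N^{-\theta}]$. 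The environmental and white-noise contributions collect, after using $C_o,C_e<1$ and $\sigma=\max\{\sigma_1,\sigma_2\}$, into precisely the quantity on the left of~\eqref{eq:stochastic_permanence}, the It\^o second-order corrections being absorbed into the $3\sigma^{2}C_e^{2}$ term, while the self-limitation constants are handled by Young's inequality $c_ix\leqslant\tfrac12 c_i^{2}+\tfrac12 x^{2}$ to give $\tfrac{c_1^{2}+c_2^{2}}{2}$. Finally, introducing the exponential weight $e^{\eta t}$ and using $\mathcal{L}\!\left(e^{\eta t}U\right)=e^{\eta t}\!\left(\eta U+\mathcal{L}U\right)$, the factor $\eta$ divided by the power $\theta$ furnishes the $\eta/\theta$ in the hypothesis; condition~\eqref{eq:stochastic_permanence} makes the bracket bounded above by a constant, so that $\mathbb{E}\!\left[e^{\eta t}U(t)\right]\leqslant U(0)+\tfrac{C}{\eta}\!\left(e^{\eta t}-1\right)$ and hence $\limsup_t\mathbb{E}[N^{-\theta}(t)]<\infty$. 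Chebyshev's inequality then delivers $\delta$ and $H$, completing the argument. The delicate point throughout is the bookkeeping that makes the drift estimate close \emph{exactly} at the threshold~\eqref{eq:stochastic_permanence} rather than with extraneous constants.
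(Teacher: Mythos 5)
Your overall skeleton matches the paper's proof: a boundary blow-up Lyapunov function kept separate in $x$ and $y$, an exponential weight $e^{\eta t}$ whose rate contributes the $\eta/\theta$ term, AM--GM applied to the birth/maturation cross terms to extract $\sqrt{b\gamma}$, a separate polynomial moment estimate for the upper tail, and Chebyshev's inequality to convert both moment bounds into tail probabilities. The paper does exactly this, with $V_3=e^{\eta t}\left(1+x^{-2}+y^{-2}\right)^{\theta}$ for the lower tail and $V_4=e^{t}\left(x^{2}+y^{2}\right)^{l}$ for the upper tail, and your upper-tail half is unproblematic.

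There is, however, a concrete gap in the lower-tail half as you have set it up. The inequality you invoke, $b\,y/x+\gamma\,x/y\geqslant 2\sqrt{b\gamma}$, is not the one that arises: with $U=\left(1/x+1/y\right)^{\theta}$ the beneficial drift terms are $b\,y/x^{2}$ and $\gamma\,x/y^{2}$, and AM--GM yields $b\,y/x^{2}+\gamma\,x/y^{2}\geqslant 2\sqrt{b\gamma}/\sqrt{xy}$, a cross term of order $(xy)^{-1/2}$. The paper's exponents $x^{-2},y^{-2}$ instead produce $b\,y/x^{3}+\gamma\,x/y^{3}\geqslant 2\sqrt{b\gamma}/(xy)$, and it is the coefficient of $1/(x^{2}y^{2})$ in the resulting expansion --- combined with the prefactor $2\left(1+x^{-2}+y^{-2}\right)$ and Young's inequality on $c_{1}/x+c_{2}/y$ --- that generates the precise combination $4\sqrt{b\gamma}-\eta/\theta-(c_{1}^{2}+c_{2}^{2})/2-d-\gamma$ appearing in~\eqref{eq:stochastic_permanence}. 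Your auxiliary inequality $1/x+1/y\geqslant 4/N$ cannot supply the factor $4$ in $4\sqrt{b\gamma}$, because it only serves at the very end to convert the moment bound on $U$ into one on $N^{-\theta}$ and plays no role in closing the drift estimate. Consequently, with your choice of $U$ the bookkeeping closes at a different threshold: you would prove stochastic permanence under an analogous but distinct sufficient condition, not the hypothesis as stated. To prove the theorem as written you should adopt the paper's function $\left(1+x^{-2}+y^{-2}\right)^{\theta}$ (or carry the computation through for your $U$ and restate the condition accordingly).
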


	\begin{proof}
		We choose positive constants $\theta$ and
		$\eta$ such that $\theta\geqslant 1$, $\eta \geqslant 1$, and
		define $C^{2}$-function $V_{3}: \mathbb{R}_{+}^{4} \rightarrow
		\mathbb{R}_{+}$ as follows:
		\begin{equation}
			V_{3}=e^{\eta t}\left(1+\frac{1}{x^{2}}+\frac{1}{y^{2}}\right)^{\theta}.
		\end{equation}

		By It\^{o}'s formula, we obtain that
		\begin{equation}
			\label{eq:dV3}
			\begin{array}{lll}
				\vspace{1ex}
				\mathrm{d} V_{3}
				& = & \displaystyle \eta e^{\eta t}\left(1+\frac{1}{x^{2}}+\frac{1}{y^{2}}\right)^{\theta} \mathrm{d} t
				-2 \theta e^{\eta t}\left(1+\frac{1}{x^{2}}+\frac{1}{y^{2}}\right)^{\theta-1}
				\left(\frac{1}{x^{3}} \mathrm{d} x+\frac{1}{y^{3}} \mathrm{d}
				y\right)
				\\\vspace{1ex}
				& & \displaystyle +\theta e^{\eta t}\left[\left(1+\frac{1}{x^{2}}+\frac{1}{y^{2}}\right)^{\theta-1}
				\frac{3}{x^{4}}+(\theta-1)\left(1+\frac{1}{x^{2}}+\frac{1}{y^{2}}\right)^{\theta-2} \frac{2}{x^{6}}\right](\mathrm{d} x)^{2}
				\\\vspace{1ex}
				& & \displaystyle +\theta e^{\eta t}\left[\left(1+\frac{1}{x^{2}}+\frac{1}{y^{2}}\right)^{\theta-1}
				\frac{3}{y^{4}}+(\theta-1)\left(1+\frac{1}{x^{2}}+\frac{1}{y^{2}}\right)^{\theta-2} \frac{2}{y^{6}}\right](\mathrm{d} y)^{2}
				\\\vspace{1ex}
				& = & \displaystyle \mathcal{L} V_{3} \mathrm{d}t+M(t),
			\end{array}
		\end{equation}
		where
		\begin{equation}
			\begin{array}{lll}
				M(t) & = & \displaystyle
				2 \theta e^{\eta\,t}\left(1+\frac{1}{x^{2}}+\frac{1}{y^{2}}\right)^{\theta-1}
				\left(\frac{\sigma_{1} C_{e}}{x^{2}} \mathrm{d} B_{1}(t)
				+\frac{\sigma_{1} C_{e}^{p}}{y^{2}\left(1+\beta C_{e}^{q}\right)} \mathrm{d}
				B_{2}(t)\right),
				\\[12pt]
				\mathcal{L} V_{3}
				& = & \displaystyle
				\eta e^{\eta t}\left(1+\frac{1}{x^{2}}+\frac{1}{y^{2}}\right)^{\theta}
				\\[12pt]
				& & \displaystyle
				-2 \theta e^{\eta t}\left(1+\frac{1}{x^{2}}+\frac{1}{y^{2}}\right)^{\theta-1}
				\left[\frac{1}{x^{2}}\left(b \frac{y}{x}-d-\gamma-c_{1} x-\alpha_{1} C_{o}-\lambda_{1} C_{e}\right)\right.
				\\[12pt]
				& & \displaystyle
				\left.+\frac{1}{y^{2}}\left(\gamma \frac{x}{y}-c_{2} y-\alpha_{2} C_{o}-\frac{\lambda_{2} C_{e}^{p}}{1+\beta C_{e}^{q}}\right)\right]
				\\[12pt]
				& & \displaystyle
				+\theta e^{\eta t}\left(1+\frac{1}{x^{2}}+\frac{1}{y^{2}}\right)^{\theta-1}
				\left[\sigma_{1}^{2} C_{e}^{2}\left(\frac{3}{x^{2}}+\left(1+\frac{1}{x^{2}}+\frac{1}{y^{2}}\right)^{-1}
				\frac{2(\theta-1)}{x^{4}}\right)\right.
				\\[12pt]
				& & \displaystyle
				\left.+\frac{\sigma_{2}^{2} C_{e}^{2 p}}{\left(1+\beta C_{e}^{q}\right)^{2}}
				\left(\frac{3}{y^{2}}+\left(1+\frac{1}{x^{2}}+\frac{1}{y^{2}}\right)^{-1} \frac{2(\theta-1)}{y^{4}}\right)\right]
				\\[12pt]
				& \coloneqq & \displaystyle
				\theta e^{\eta t} \left(1+\frac{1}{x^{2}}+\frac{1}{y^{2}}\right)^{\theta-2}(I_{31}+I_{32}+I_{33}).
			\end{array}
		\end{equation}

		After simplification, we derive
		\begin{equation}
			\begin{array}{lll}
				I_{31}& =& \displaystyle
				\frac{\eta}{\theta}
				\left[1+2\left(\frac{1}{x^{2}}+\frac{1}{y^{2}}\right)+\left(\frac{1}{x^{2}}+\frac{1}{y^{2}}\right)^{2}\right],
			\end{array}
		\end{equation}
		\begin{equation}
			\begin{array}{lll}
				I_{32} & < & \displaystyle
				-2\left(1+\frac{1}{x^{2}}+\frac{1}{y^{2}}\right)
				\left[\frac{2 \sqrt{b \gamma}}{x y}-\frac{1}{x^{2}}\left(d+\gamma+\alpha_{1} C_{o}+\lambda_{1} C_{e}\right)\right.
				\\[12pt]
				& & \displaystyle \left.-\frac{1}{y^{2}}
				\left(\alpha_{2} C_{o}+\frac{\lambda_{2} C_{e}^{p}}{1+\beta C_{e}^{q}}\right)\right]
				+2\left(1+\frac{1}{x^{2}}+\frac{1}{y^{2}}\right)\left(\frac{c_{1}}{x}+\frac{c_{2}}{y}\right),
			\end{array}
		\end{equation}
		\begin{equation}
			\begin{array}{lll}
				I_{33}
				& = & \displaystyle
				3\left(1+\frac{1}{x^{2}}+\frac{1}{y^{2}}\right)
				\left(\frac{\sigma_{1}^{2} C_{e}^{2}}{x^{2}}+\frac{\sigma_{2}^{2} C_{e}^{2p}}{y^{2}(1+\beta C_{e}^{q})^2}\right)
				+2(\theta-1)\left(\frac{\sigma_{1}^{2} C_{e}^{2}}{x^{4}}+\frac{\sigma_{2}^{2} C_{e}^{2p}}{y^{2}(1+\beta
				C_{e}^{q})^2}\right)
				\\[12pt]
				& < & \displaystyle 3\sigma^{2}C_{e}^{2}\left(1+\frac{1}{x^{2}}+\frac{1}{y^{2}}\right)
				\left(\frac{1}{x^{2}}+\frac{1}{y^{2}}\right)
				+ 2(\theta-1)\sigma^{2}C_{e}^{2}
				\left(\frac{1}{x^{4}}+\frac{1}{y^{4}}\right),
			\end{array}
		\end{equation}
		therefore we have
		\begin{equation}
			\begin{array}{lll}
				I_{31}+I_{32}+I_{33}
				& < & \displaystyle
				\frac{\eta}{\theta}+\frac{2 c_{1}}{x}+\frac{2 c_{1}}{x^{3}}+\frac{2 c_{2}}{y}+\frac{2 c_{2}}{y^{3}}\\[12pt]
				& & \displaystyle
				+\left(\frac{2 \eta}{\theta}+2\left(d+\gamma+\alpha_{1} C_{o}+\lambda_{1} C_{e}\right)
				+3 \sigma^{2} C_{e}^{2} +1\right) \frac{1}{x^{2}} \\[12pt]
				& & \displaystyle
				+\left(\frac{2 \eta}{\theta}+2\left(\alpha_{2} C_{o}+\frac{\lambda_{2} C_{e}^{p}}{1+\beta C_{e}^{q}}\right)
				+3 \sigma^{2} C_{e}^{2} +1\right) \frac{1}{y^{2}} \\[12pt]
				& & \displaystyle
				+\left(\frac{\eta}{\theta}+2\left(d+\gamma+\alpha_{1} C_{o}+\lambda_{1} C_{e}\right)
				+(3+2(\theta-1)) \sigma^{2} C_{e}^{2}\right) \frac{1}{x^{4}} \\[12pt]
				& & \displaystyle
				+\left(\frac{\eta}{\theta}+2\left(\alpha_{2} C_{o}+\frac{\lambda_{2} C_{e}^{p}}{1+\beta C_{e}^{q}}\right)
				+(3+2(\theta-1)) \sigma^{2} C_{e}^{2}\right) \frac{1}{y^{4}} \\[12pt]
				& & \displaystyle
				+2\left[\frac{\eta}{\theta}+\left(d+\gamma+\alpha_{1} C_{o}+\lambda_{1} C_{e}\right)
				+\left(\alpha_{2} C_{o}+\frac{\lambda_{2} C_{e}^{p}}{1+\beta C_{e}^{q}}\right)+3 \sigma^{2} C_{e}^{2}\right. \\[12pt]
				& & \displaystyle
				\left.+\frac{c_{1}^{2}+c_{2}^{2}}{2}-4 \sqrt{b \gamma}\right] \frac{1}{x^{2} y^{2}}-\frac{4 \sqrt{b \gamma}}{x
				y}.
			\end{array}
		\end{equation}
		By condition~\eqref{eq:stochastic_permanence} of Theorem~\ref{thm:stochastic_permanence}, so we get
		\begin{equation}
			\begin{array}{lll}
				I_{31}+I_{32}+I_{33}
				& < & \displaystyle
				\frac{\eta}{\theta}+\frac{2 c_{1}}{x}+\frac{2 c_{1}}{x^{3}}+\frac{2 c_{2}}{y}+\frac{2 c_{2}}{y^{3}}\\[12pt]
				& & \displaystyle
				+\left(\frac{2 \eta}{\theta}+2\left(d+\gamma+\alpha_{1}+\lambda_{1}\right)+3 \sigma^{2}+1\right) \frac{1}{x^{2}} \\[12pt]
				& & \displaystyle
				+\left(\frac{2 \eta}{\theta}+2\left(\alpha_{2}+\lambda_{2}\right)+3 \sigma^{2}+1\right) \frac{1}{y^{2}} \\[12pt]
				& & \displaystyle
				+\left(\frac{\eta}{\theta}+2\left(d+\gamma+\alpha_{1}+\lambda_{1}\right)+(3+2(\theta-1) \sigma^{2}\right) \frac{1}{x^{4}} \\[12pt]
				& & \displaystyle
				+\left(\frac{\eta}{\theta}+2\left(\alpha_{2}+\lambda_{2}\right)+(3+2(\theta-1) \sigma^{2}\right) \frac{1}{y^{4}} \\[12pt]
				& \coloneqq & F(x, y).
			\end{array}
		\end{equation}
		It is obvious that
		\begin{equation}
			H(x, y) \coloneqq F(x, y)G(x, y), \quad \, G(x, y)=\left(1+\frac{1}{x^{2}}+\frac{1}{y^{2}}\right)^{\theta-2}
		\end{equation}
		admits an upper boundary for all $x$ and $y$ in the domain
		$\mathbb{R}_{+}$.
		Let us denote
		\begin{equation}
			H_{1}=\sup _{(x, y) \in \mathbb{R}_{+}^{2}} H(x, y)<\infty.
		\end{equation}
		Integrating~\eqref{eq:dV3}, together with $x_{0}=x(0),~ y_{0}=y(0)$,
		which gives that
		\begin{equation}
			e^{\eta t}\left(1+\frac{1}{x^{2}}+\frac{1}{y^{2}}\right)^{\theta}
			<\left(1+\frac{1}{x_{0}^{2}}+\frac{1}{y_{0}^{2}}\right)^{\theta}
			+\frac{H_{1}\theta}{\eta}(e^{\eta t}-1)+\int_{0}^{t} M(s)\mathrm{d}s,
		\end{equation}
		taking the expectation on both sides, then
		\begin{equation}
			\label{eq:EV3}
			\mathbb{E}\left[\frac{1}{2^{\theta}}\left(1+\frac{1}{x^{2}}+\frac{1}{y^{2}}\right)^{\theta}\right]
			< \mathbb{E}\left[\frac{e^{-\eta\,t}}{2^{\theta}}\left(1+\frac{1}{x_{0}^{2}}+\frac{1}{y_{0}^{2}}\right)^{\theta}
			+\frac{H_{1}\theta}{2^{\theta}\eta}(1-e^{-\eta\,t})+\frac{e^{-\eta\,t}}{2^{\theta}}\int_{0}^{t}
			M(s)\mathrm{d}s\right].
		\end{equation}
		Therefore, together with~\eqref{eq:EV3}, taking supremum limit, we
		derive that
		\begin{equation}
			\label{eq:H2}
			\limsup _{t \rightarrow \infty} \mathbb{E}\left(\frac{1}{x+y}\right)^{2 \theta}
			\leqslant \limsup _{t \rightarrow \infty} \mathbb{E}\left(\frac{1}{x^{2}+y^{2}}\right)^{\theta}
			< \limsup _{t \rightarrow \infty}\left[\frac{1}{2^{\theta}}
			\left(1+\frac{1}{x^{2}}+\frac{1}{y^{2}}\right)^{\theta}\right]<\frac{H_{1}\theta}{2^{\theta}
				\eta}\coloneqq H_{2}.
		\end{equation}

		For any given $\varepsilon>0$, we let
		\begin{equation*}
			\xi=\left(\frac{\varepsilon}{H_{2}}\right)^{\frac{1}{2
			\theta}},
		\end{equation*}
		the Chebyshev inequality gives that
		\begin{equation*}
			\mathbb{P}\{x+y<\xi\}=\mathbb{P}\left\{(x+y)^{-\theta}>\xi^{-\theta}\right\}
			\leqslant \frac{\mathbb{E}\left((x+y)^{-2 \theta}\right)}{\xi^{-2 \theta}}=\xi^{2 \theta} \mathbb{E}\left((x+y)^{-2
			\theta}\right).
		\end{equation*}
		Combined with~\eqref{eq:H2}, we have
		\begin{equation}
			\lim _{t \rightarrow \infty}
			\mathbb{P}\{x+y<\xi\}<\varepsilon,
		\end{equation}
		which implies that
		\begin{equation}
			\liminf _{t \rightarrow \infty} \mathbb{P}\{x+y \geqslant \xi\} \geqslant
			1-\varepsilon.
		\end{equation}


		We define a $C^{2}$-function $
		V_{4}=e^{t}\left(x^{m}+y^{n}\right)^{l}, 0<l<1, m, n \in
		\mathbb{N}_{+}.$ By It\^{o}'s formula again
		\begin{equation*}
			\begin{array}{lll}
				\vspace{1ex}
				\mathrm{d} V_{4}
				& = & \displaystyle \left.e^{t}\left(x^{m}+y^{n}\right) \mathrm{d} t+l e^{t}\left(x^{m}+y^{n}\right)\right)^{l-1}\left(m x^{m-1} \mathrm{d} x+n y^{n-1} \mathrm{d} y\right)
				\\\vspace{1ex}
				& & \displaystyle +\frac{1}{2} l(l-1) e^{t}\left(x^{m}+y^{n}\right)^{l-2}\left(m^{2} x^{2(m-1)}(\mathrm{d} x)^{2}+n^{2} y^{2(n-1)}(\mathrm{d} y)^{2}\right)
				\\\vspace{1ex}
				& & \displaystyle +\frac{1}{2} l e^{t}\left(x^{m}+y^{n}\right)^{l-1}\left(m(m-1) x^{m-2}(\mathrm{d} x)^{2}+n(n-1) y^{n-2}(\mathrm{d} y)^{2}\right)
				\\\vspace{1ex}
				& = & \displaystyle \mathcal{L} V_{4} \mathrm{d} t -l e^{t}\left(x^{m}+y^{n}\right)^{l-1}\left(m \sigma_{1} x^{m} C_{e} \mathrm{d} B_{1}(t)
				+\frac{n \sigma_{2} y^{n} C_{e}^{p}}{1+\beta C_{e}^{q}} \mathrm{d}
				B_{2}(t)\right),
			\end{array}
		\end{equation*}
		where
		\begin{equation}
			\begin{array}{lll}
				\vspace{1ex}
				\mathcal{L} V_{4}
				& = & \displaystyle e^{t}(x^{m}+y^{n})^{l}
				+l e^{t}(x^{m}+y^{n})^{l-1}\Big[\Big(b \frac{y}{x}-d-\gamma-c_{1} x-\alpha_{1} C_{o}-\lambda_{1} C_{e}\Big) m x^{m}
				\\\vspace{1ex}
				& & \displaystyle + \Big(\gamma \frac{x}{y}-c_{2} y-\alpha_{2} C_{o}-\frac{\lambda_{1} C_{e}^{p}}{1+\beta C_{e}^{q}}\Big) n y^{n}\Big]
				\\\vspace{1ex}
				& & \displaystyle +\frac{1}{2} l(l-1) e^{t}(x^{m}+y^{n})^{l-2}\Big(m^{2} \sigma_{1}^{2} x^{2 m} C_{e}^{2}
				+\frac{n^{2} \sigma_{2}^{2} y^{2 n} C_{e}^{2 p}}{(1+\beta C_{e}^{q})^{2}}\Big)
				\\\vspace{1ex}
				& & \displaystyle +\frac{1}{2} l e^{t}(x^{m}+y^{n})^{l-1}\Big(m(m-1) \sigma_{1}^{2} x^{m} C_{e}^{2}
				+\frac{n(n-1) \sigma_{2}^{2} y^{n} C_{e}^{2 p}}{(1+\beta C_{e}^{q})^{2}}\Big)
				\\\vspace{1ex}
				& < & \displaystyle e^{t}(x^{m}+y^{n})^{l-1}\Big\{x^{m}+y^{n}
				+l\Big[\Big(b \frac{y}{x}-d-\gamma-c_{1} x-\alpha_{1} C_{o}-\lambda_{1} C_{e}\Big) m x^{m}
				\\\vspace{1ex}
				& & \displaystyle +\Big(\gamma \frac{x}{y}-c_{2} y-\alpha_{2} C_{o}-\frac{\lambda_{1} C_{e}^{p}}{1+\beta C_{e}^{q}}\Big) n y^{n}\Big]
				\\\vspace{1ex}
				& & \displaystyle +\frac{1}{2} l\Big(m(m-1) \sigma_{1}^{2} x^{m} C_{e}^{2}
				+\frac{n(n-1) \sigma_{2}^{2} y^{n} C_{e}^{2 p}}{(1+\beta C_{e}^{q})^{2}}\Big)\Big\}
				\\\vspace{1ex}
				& \leqslant & \displaystyle e^{t}(x^{m}+y^{n})^{l-1}\Big\{x^{m}+y^{n}
				\\\vspace{1ex}
				& & \displaystyle +\frac{b m l}{2}(x^{2(m-1)}+y^{2})+\frac{m l}{2}(x^{2}+y^{2(n-1)})-c_{1} m l x^{m+1}-c_{2} n l y^{n+1}
				\\\vspace{1ex}
				& & \displaystyle +l\Big[\Big(-d-\alpha_{1} C_{o}-\lambda_{1} C_{e}\Big) m x^{m}
				+\Big(-\alpha_{2} C_{o}-\frac{\lambda_{2} C_{e}^{p}}{1+\beta C_{e}^{q}}\Big) n y^{n}\Big]
				\\\vspace{1ex}
				& & \displaystyle +\frac{1}{2} l\Big(m(m-1) \sigma_{1}^{2} x^{m} C_{e}^{2}
				+\frac{n(n-1) \sigma_{2}^{2} y^{n} C_{e}^{2 p}}{(1+\beta C_{e}^{q})^{2}}\Big)\Big\}
				\\\vspace{1ex}
				& = & \displaystyle e^{t}(x^{m}+y^{n})^{l-1}\Big\{-c_{1} m l x^{m+1}-c_{2} n l y^{n+1}
				\\\vspace{1ex}
				& & \displaystyle +\Big[1+\frac{1}{2} m(m-1) l \sigma_{1}^{2} C_{e}^{2}-m l(d+\alpha_{1} C_{o}+\lambda_{1} C_{e})\Big] x^{m}
				\\\vspace{1ex}
				& & \displaystyle +\Big[1+\frac{n(n-1) l \sigma_{2}^{2} C_{e}^{2 p}}{2(1+\beta C_{e}^{q})^{2}}
				-n l\Big(\alpha_{2} C_{o}+\frac{\lambda_{2} C_{e}^{p}}{1+\beta C_{e}^{q}}\Big)\Big] y^{n}
				\\\vspace{1ex}
				& & \displaystyle +\frac{b m l}{2}(x^{2(m-1)}+y^{2})+\frac{m
				l}{2}(x^{2}+y^{2(n-1)})\Big\}.
			\end{array}
		\end{equation}

		We take $m=n=2$, then
		\begin{equation}
			\label{eq:LV4m=n=2}
			\begin{array}{lll}
				\vspace{1ex}
				\mathcal{L} V_{4}
				& < & e^{t}\left(x^{2}+y^{2}\right)^{l-1}\left\{-2 c_{1} l x^{3}-2 c_{2} l y^{3}\right.
				\\\vspace{1ex}
				& & +\left[1+l \sigma_{1}^{2} C_{e}^{2}-2 l\left(d+\alpha_{1} C_{o}+\lambda_{1} C_{e}\right)+b l+\gamma\right] x^{2}
				\\\vspace{1ex}
				& & \displaystyle \left.+\left[1+\frac{l \sigma_{2}^{2} C_{e}^{2 p}}{\left(1+\beta C_{e}^{q}\right)^{2}}-2 l\left(\alpha_{2} C_{o}+\frac{\lambda_{2} C_{e}^{p}}{1+\beta C_{e}^{q}}\right)+b l+\gamma\right] y^{2}\right\}
				\\\vspace{1ex}
				& \coloneqq & e^{t} J(x, y)  \leqslant C e^{t}
			\end{array}
		\end{equation}
		taking the expectation on both sides of~\eqref{eq:LV4m=n=2}, then
		\begin{equation}
			\mathbb{E}(e^{t}(x^{2}+y^{2})^{l})-(x^{2}(0)+y^{2}(0))^{l}<C(e^{t}-1),
		\end{equation}
		where $ C $ is a constant.
		Due to $ x+y \leqslant x^{2}+y^{2},$ we have
		\begin{equation}
			\label{eq:3.49}
			\liminf _{t \rightarrow \infty}
			\mathbb{E}\left(x^{2}+y^{2}\right)<C.
		\end{equation}
		For any given $\varepsilon>0$, we let
		\begin{equation}
			N=\left(\frac{C}{\varepsilon}\right)^{\frac{\,1}{\,l}},
		\end{equation}
		by Chebyshev inequality
		\begin{equation}
			\mathbb{P}\{x+y>N\}=\mathbb{P}\{(x+y)^{\frac{l}{2}}>N^{\frac{l}{2}}\} \leqslant
			\frac{\mathbb{E}(x+y)^{l}}{N^{l}}.
		\end{equation}
		Combined with~\eqref{eq:3.49}, we have
		\begin{equation}
			\liminf _{t \rightarrow \infty}
			\mathbb{P}\{x+y>N\}<\varepsilon,
		\end{equation}
		which implies that
		\begin{equation}
			\limsup _{t \rightarrow \infty} \mathbb{P}\{x+y \leqslant N\} \geqslant
			1-\varepsilon.
		\end{equation}

		In other words, the single population is random and persistent.
	\end{proof}

	\section{Numerical simulations}
	\label{sec:simulations}

	To support the main conclusions of this paper, we
	selected the appropriate parameters and initial values for each
	conclusion.
	The main method of numerical simulation is to
	discretize the original system and use matlab to observe the trend
	of the graph.
	Then verify the conclusions of this paper.
	At the
	same time of verifying random and persistent survival, under the
	premise that the remaining parameters are fixed, we take 4 sets of
	different psychological effect intensities for numerical simulation.

	The parameters and their reference value are shown on this table:
	\begin{table}[H]
		\begin{center}
			\begin{minipage}{\textwidth}
				\caption{Parameters and their reference value}
				\label{tab:parameters_value}
				\begin{tabular*}{\textwidth}{@{\extracolsep{\fill}}cccl@{\extracolsep{\fill}}}
					\toprule
					Parameter         & Value     & Reference value & Source                           \\
					\midrule
					$b$                    & 0.3       & [0.1,0.3]       & ~\cite{RN34,RN37,RN41}           \\
					$d$                    & 0.01      & 0.01            & ~\cite{RN34}                     \\
					$\gamma$               & 0.5       & [0.2,0.4]       & ~\cite{RN37,RN41}                \\
					$c_1$                  & 0.02      & -               & Assumed                          \\
					$c_2$                  & 0.05      & [0.1,0.4]       & ~\cite{RN34,RN37,RN41}           \\
					$\alpha_1$             & 0.2       & -               & Assumed                          \\
					$\alpha_2$             & 0.1       & [0.1,0.5]       & ~\cite{RN34,RN37,RN41}           \\
					$\lambda_1$            & 0.07      & -               & Assumed                          \\
					$\lambda_2$            & 0.05      & 0.05            & ~\cite{RN34,RN37}                \\
					$k$                    & 0.1       & [0.1,0.6]       & ~\cite{RN34,RN57,RN37,RN41}      \\
					$g$                    & 0.08      & [0.08,0.3]      & ~\cite{RN34,RN57,RN37,RN41}      \\
					$m$                    & 0.04      & [0.04,0.3]      & ~\cite{RN34,RN57,RN37,RN41}      \\
					$h$                    & 0.8       & [0.1,0.5]       & ~\cite{RN34,RN57,RN37,RN58,RN41} \\
					$u_e$                  & [0,0.6]   & [0,0.8]         & ~\cite{RN34,RN57,RN41}           \\
					$\beta$                & [0.1,10]  & [0.1,10]        & ~\cite{RN34,RN37}                \\
					$\sigma_1$, $\sigma_2$ & [0.1,0.3] & [0.1,0.3]       & ~\cite{RN34}                     \\
					\bottomrule
				\end{tabular*}
			\end{minipage}
		\end{center}
	\end{table}

	\subsection{Simulation of globally unique positive solution}
	\label{subsec:simulations_unique_positive_solution}

	The solution of equations~\eqref{eq:RewrittenEquationGroup} in the first quadrant is the globally unique positive solution, noted as $ (A, B) $.
	Take the value in Table~\ref{tab:parameters_value}, then we can get $A = 2.8078$ and $ B = 5.2989$.
	We draw and calibrate $(A, B)$ as shown in the Figure~\ref{fig:AB}.

	\begin{figure}[H]
		\centering
		\includegraphics[width=0.8\textwidth]{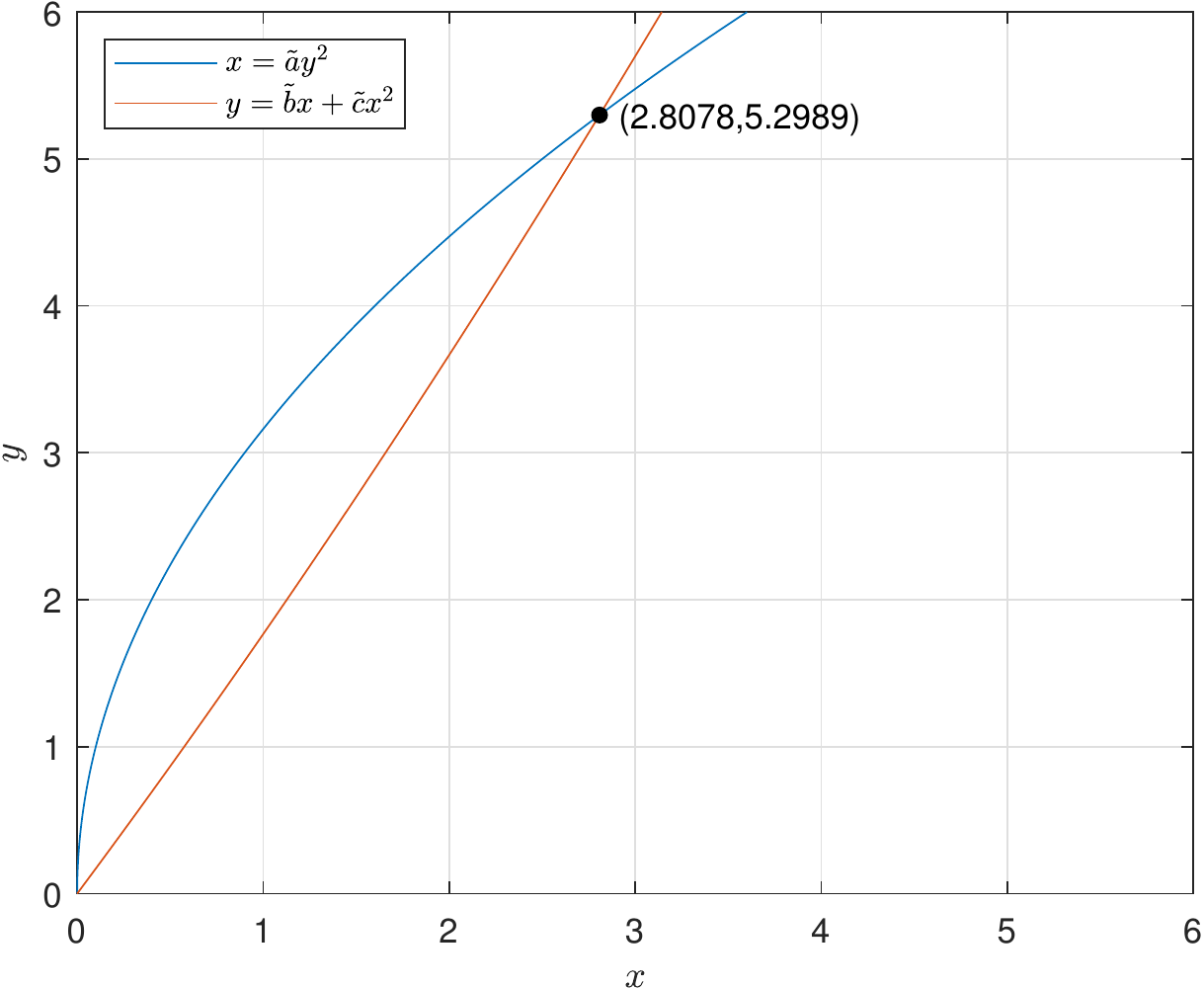}
		\caption{The globally unique positive solution of equations~\eqref{eq:RewrittenEquationGroup}}
		\label{fig:AB}
	\end{figure}

	\subsection{Simulations of weakly persistent in the mean}
	\label{subsec:simulations_weakly_persistent}

	Here we use numerical simulation results to support the conclusion of Theorem 3.
	We determine the parameters and initial values that meet the conditions of the theorem.
	Respectively, we select two toxin input levels for numerical simulation, through the image branch of numerical simulation to support the conclusion of subsection~\ref{subsec:weakly_persistent}.

	Considering that theorem 3 requires non-pollution, according to the actual situation, we take the smaller toxin input level $u_e\left(t\right)=0.1$ and $u_e\left(t\right)=0.1+0.1 \sin(t)$, and the psychological effect intensity $\beta= 0.1$, and then take the remaining parameters the same as Table~\ref{tab:parameters_value} while the variable parameters are shown on this Table~\ref{tab:variable_Th3}:
	\begin{table}[H]
		\begin{center}
			\begin{minipage}{\textwidth}
				\caption{Variable parameters and their value}
				\label{tab:variable_Th3}
				\begin{tabular*}{\textwidth}{@{\extracolsep{\fill}}ccccc@{\extracolsep{\fill}}}
					\toprule
					Group & $u_e$             & $\beta$ & $\sigma_1$ & $\sigma_2$ \\
					\midrule
					Group 1    & 0.1               & 0.1     & 0.15       & 0.1        \\
					Group 2    & $0.1+0.1 \sin(t)$ & 0.1     & 0.15       & 0.1        \\
					\bottomrule
				\end{tabular*}
			\end{minipage}
		\end{center}
	\end{table}

	The diagrams~\ref{fig:Th3} below track the density curves of $ x(t) $ and $ y(t)
	$ in Figure~\ref{fig:Th3_1} and Figure~\ref{fig:Th3_2} and toxin concentration $ C_o(t) $ and $ C_e(t) $ over time in Figure~\ref{fig:Th3_3} and Figure~\ref{fig:Th3_4}, when this population is weakly persistent in the mean.

	\begin{figure}[H]
		\centering
		\begin{subfigure}{0.45\textwidth}
			\includegraphics[width=\textwidth]{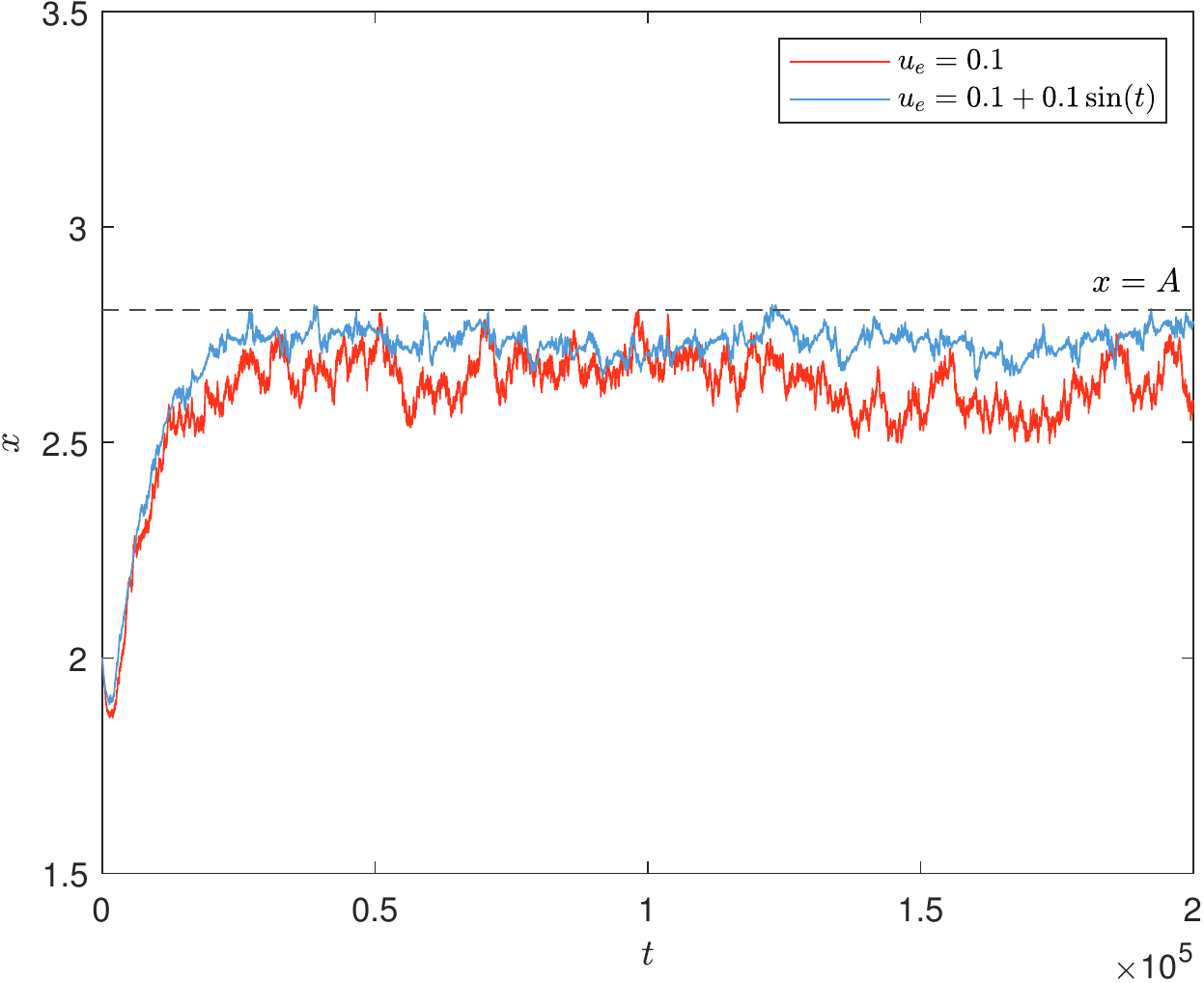}
			\caption{The density curves of $ x(t) $}
			\label{fig:Th3_1}
		\end{subfigure}
		\hfill
		\begin{subfigure}{0.45\textwidth}
			\includegraphics[width=\textwidth]{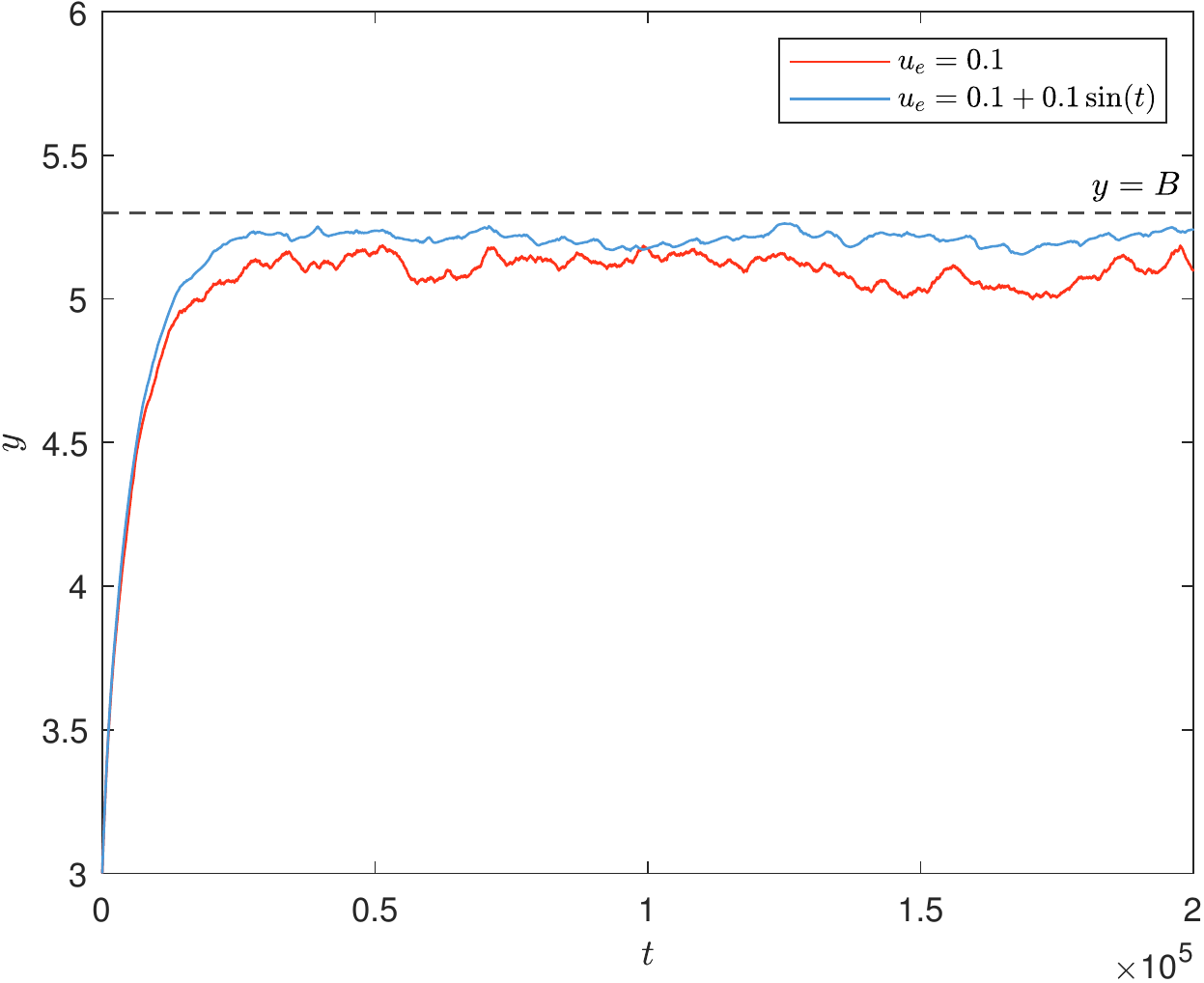}
			\caption{The density curves of $ y(t) $}
			\label{fig:Th3_2}
		\end{subfigure}
		\hfill
		\begin{subfigure}{0.45\textwidth}
			\includegraphics[width=\textwidth]{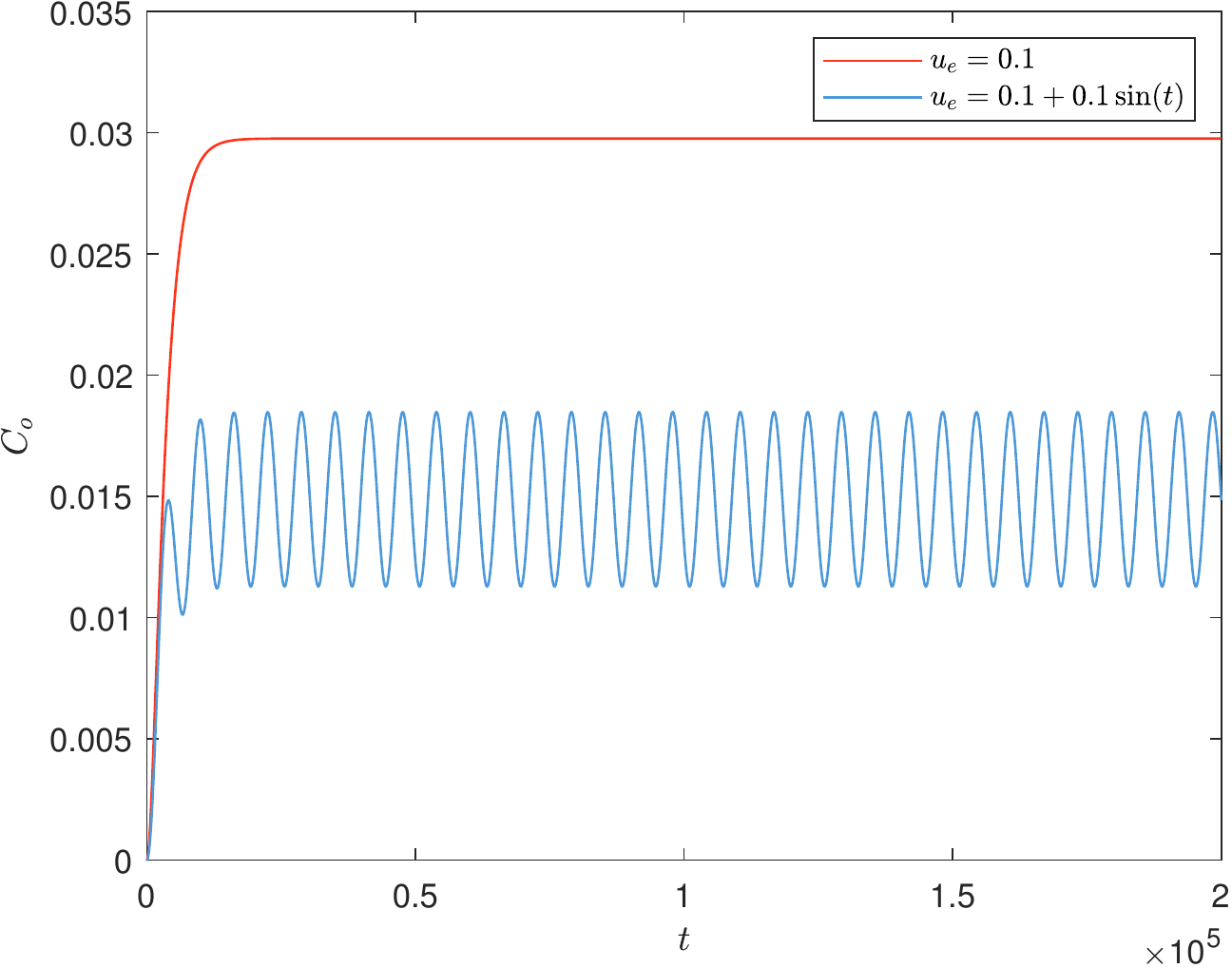}
			\caption{Toxin concentration $ C_o(t) $}
			\label{fig:Th3_3}
		\end{subfigure}
		\hfill
		\begin{subfigure}{0.45\textwidth}
			\includegraphics[width=\textwidth]{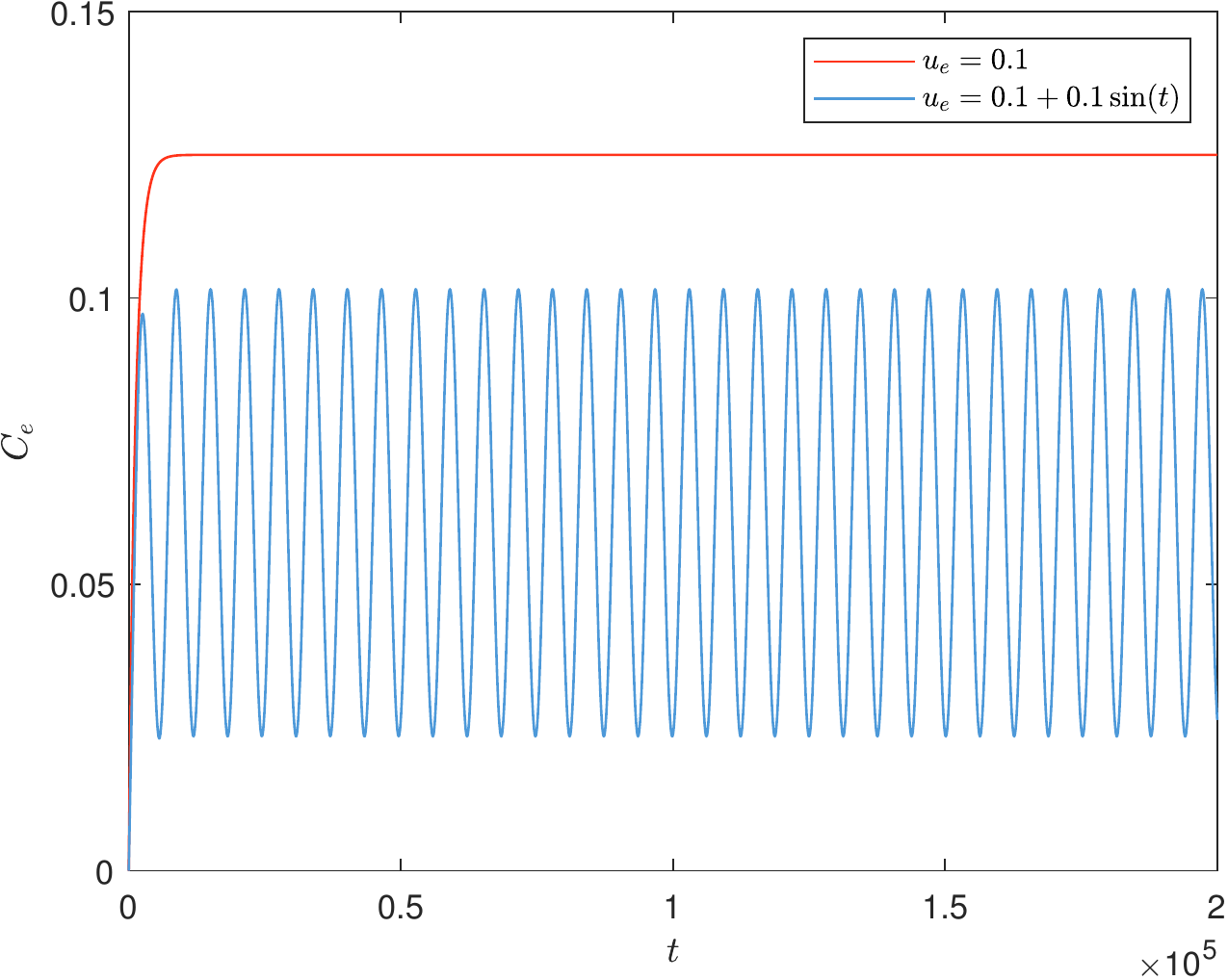}
			\caption{Toxin concentration $ C_e(t) $}
			\label{fig:Th3_4}
		\end{subfigure}
		\caption{The density curves of $ x(t) $ and $ y(t)$, Toxin concentration $ C_o(t) $ and $ C_e(t) $ with different $ u_e $.}
		\label{fig:Th3}
	\end{figure}

	\subsection{Simulations of stochastic permanence}
	\label{subsec:simulations_stochastic_permanence}

	We take the parameters and initial values that meet the conditions of the Theorem~\ref{thm:stochastic_permanence}, select the input levels of two toxins respectively, and group them according to the different noise intensity.
	At the same time, in order to verify the influence of the intensity of psychological effects on the dynamics of a single population, we divide them under the premise that the other parameters are fixed.
	We determine the variable parameters as in Table~\ref{tab:variable_Th4}, while others are consistent with Table~\ref{tab:parameters_value}.

	The variable parameters are shown on this table:
	\begin{table}[H]
		\begin{center}
			\begin{minipage}{\textwidth}
				\caption{Variable parameters and their value}
				\label{tab:variable_Th4}
				\begin{tabular*}{\textwidth}{@{\extracolsep{\fill}}ccccc@{\extracolsep{\fill}}}
					\toprule
					Figure & $u_e$             & $\beta$ & $ \sigma_1, \sigma_2$ \\
					\midrule
					a           & 0.3               & 0.1     & 0.1 or 0.3            \\
					b           & 0.3               & 1       & 0.1 or 0.3            \\
					c           & 0.3               & 10      & 0.1 or 0.3            \\
					d           & $0.3+0.3 \sin(t)$ & 0.1     & 0.1 or 0.3            \\
					e           & $0.3+0.3 \sin(t)$ & 1       & 0.1 or 0.3            \\
					f           & $0.3+0.3 \sin(t)$ & 10      & 0.1 or 0.3            \\
					\bottomrule
				\end{tabular*}
			\end{minipage}
		\end{center}
	\end{table}

	The three diagrams in Figure~\ref{fig:Th4x}, Figure~\ref{fig:Th4y} and Figure~\ref{fig:Th4CeCo} respectively track the density curves of $x(t)$ and $ y(t) $, toxin concentration $ C_o(t) $ and $ C_e(t) $    over time, when this population is stochastic permanence.

	We first take $u_e(t)= 0.3$, and on the premise of fixing the noise intensity of the two groups, respectively take four groups of psychological effect intensity $\beta= 0.1, 1, 10$ for comparison.
	The results are shown in Figure~\ref{fig:Th4x_1}-\ref{fig:Th4x_3} and Figure~\ref{fig:Th4y_1}-\ref{fig:Th4y_3}.

	Then we take $u_e\left(t\right)=0.3+0.3 \sin(t)$, on the premise of fixing the noise intensity of the two groups, respectively take four groups of psychological effect intensity $\beta= 0.1, 1, 10$ for comparison, the results are shown in Figure~\ref{fig:Th4x_4}-\ref{fig:Th4x_6} and Figure~\ref{fig:Th4y_4}-\ref{fig:Th4y_6}:

	The results show that when the psychological effect
	intensity increases, the adult population size will tend to be
	stable, which means that adults will actively evade the living
	area with high toxin concentration.

	\begin{figure}[H]
		\centering
		\begin{subfigure}[b]{0.32\textwidth}
			\includegraphics[width=\textwidth]{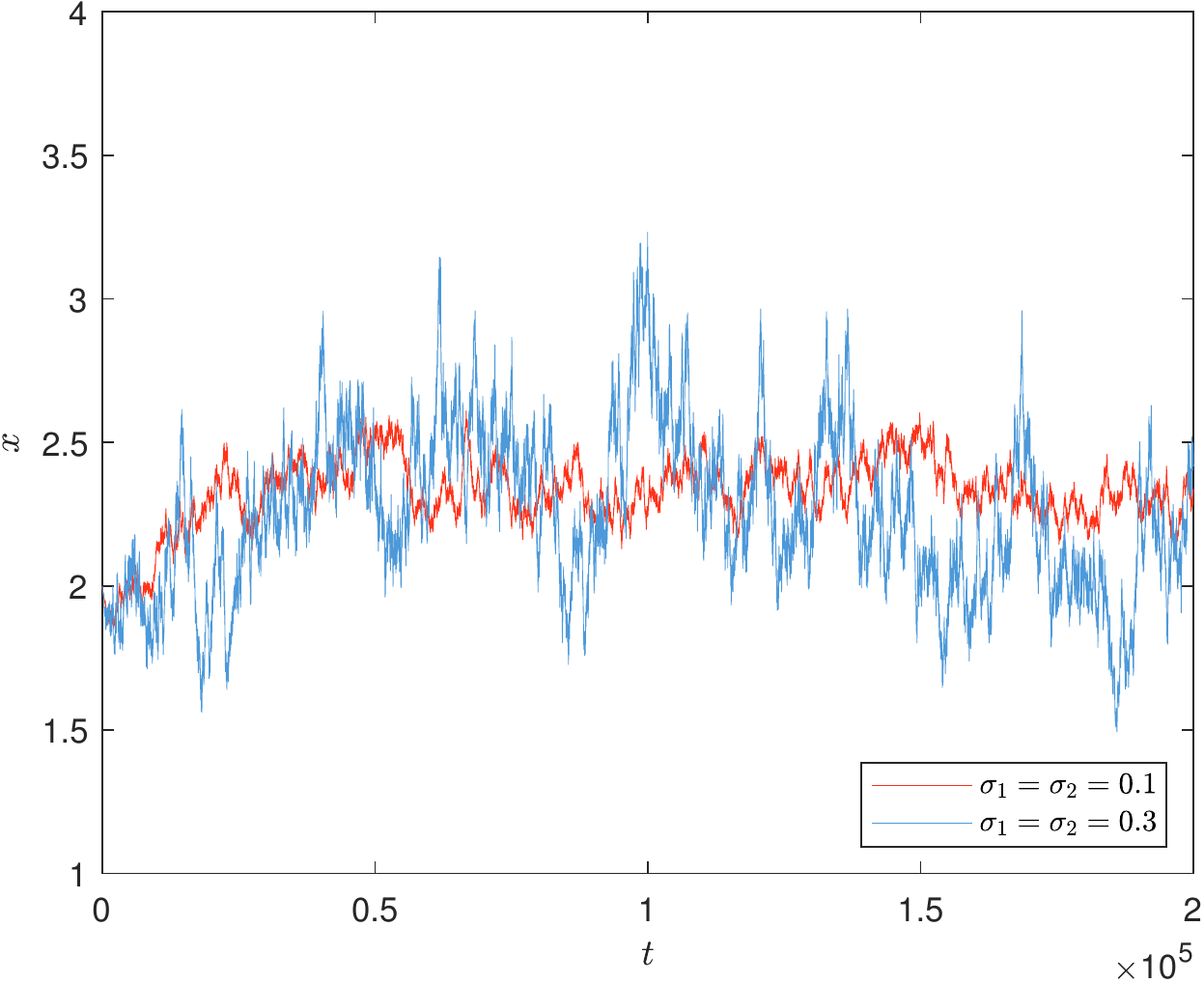}
			\caption{$ u_e=0.3, \beta=0.1 $}
			\label{fig:Th4x_1}
		\end{subfigure}
		\hfill
		\begin{subfigure}[b]{0.32\textwidth}
			\includegraphics[width=\textwidth]{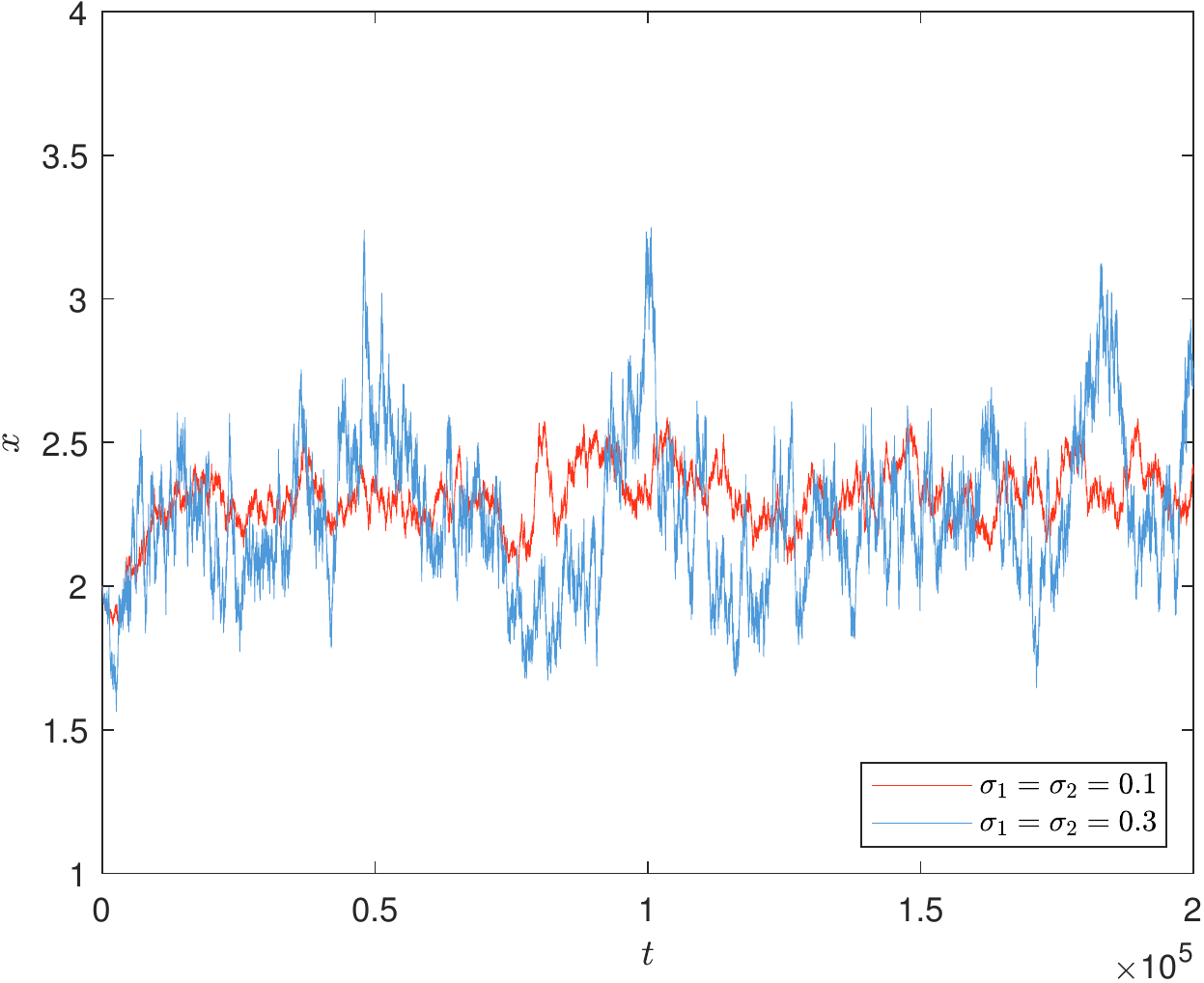}
			\caption{$ u_e=0.3, \beta=1 $}
			\label{fig:Th4x_2}
		\end{subfigure}
		\hfill
		\begin{subfigure}[b]{0.32\textwidth}
			\includegraphics[width=\textwidth]{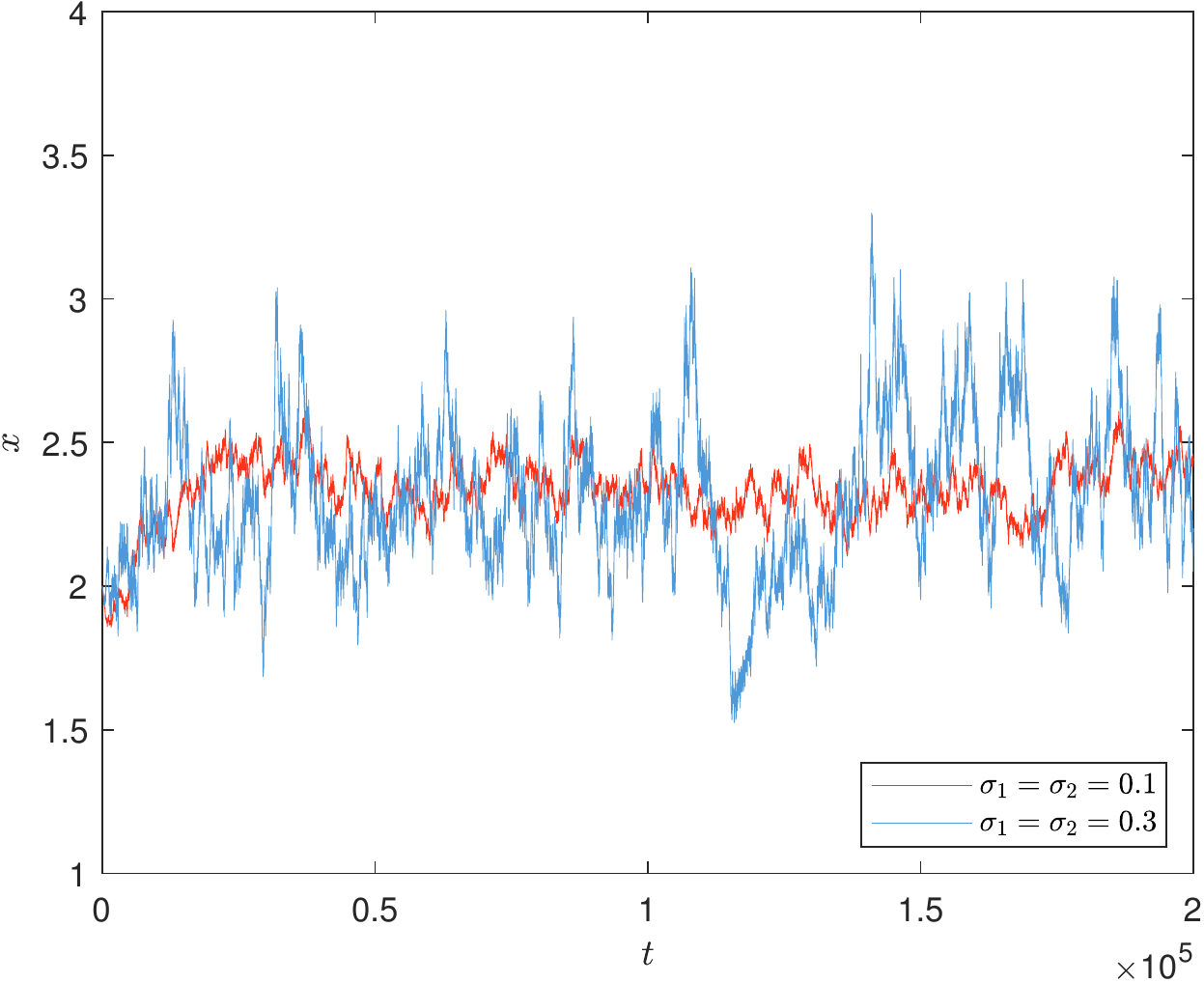}
			\caption{$ u_e=0.3, \beta=10 $}
			\label{fig:Th4x_3}
		\end{subfigure}
		\hfill
		\begin{subfigure}[b]{0.32\textwidth}
			\includegraphics[width=\textwidth]{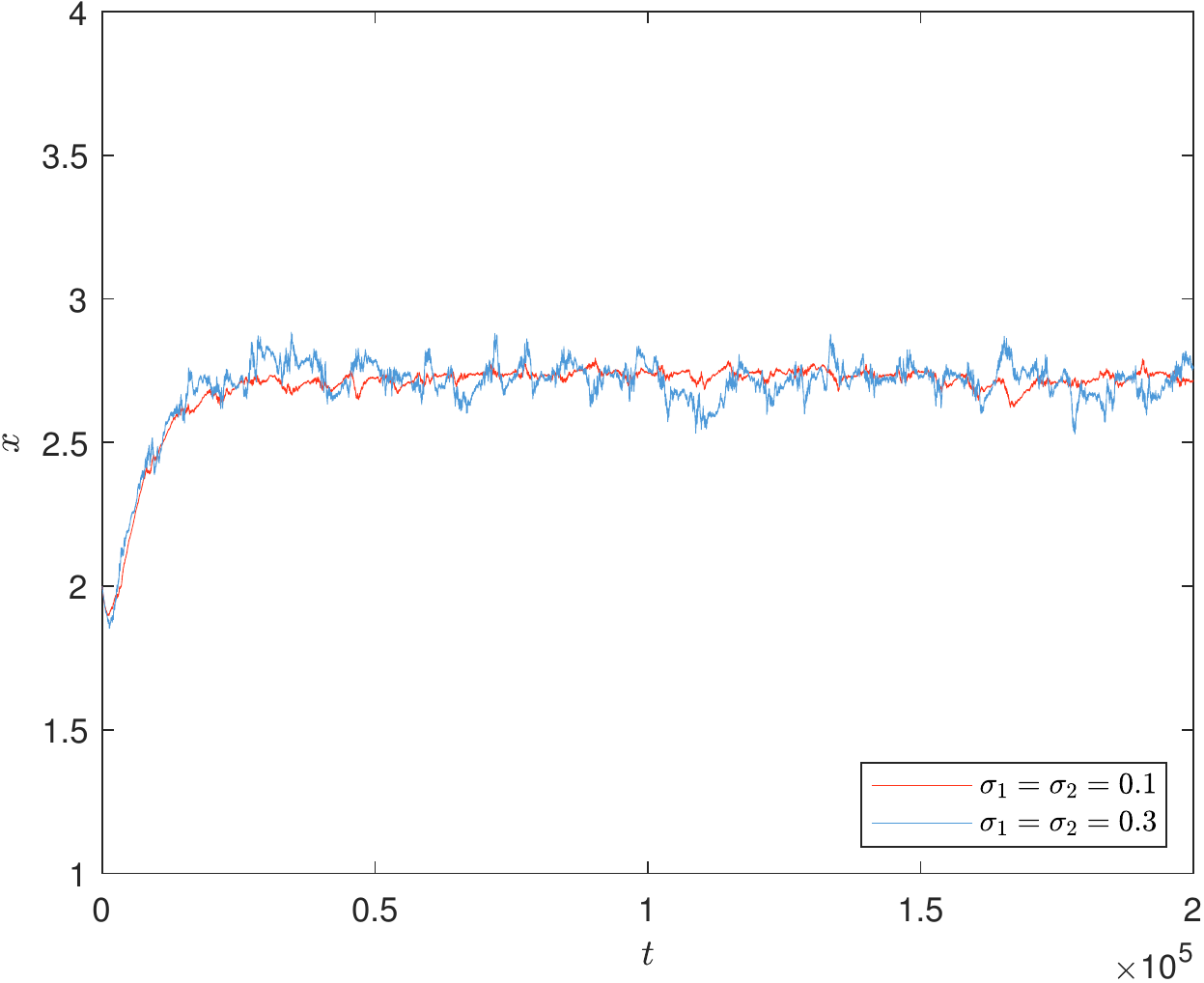}
			\caption{$ u_e=0.3+0.3\sin(t), \beta=0.1 $}
			\label{fig:Th4x_4}
		\end{subfigure}
		\hfill
		\begin{subfigure}[b]{0.32\textwidth}
			\includegraphics[width=\textwidth]{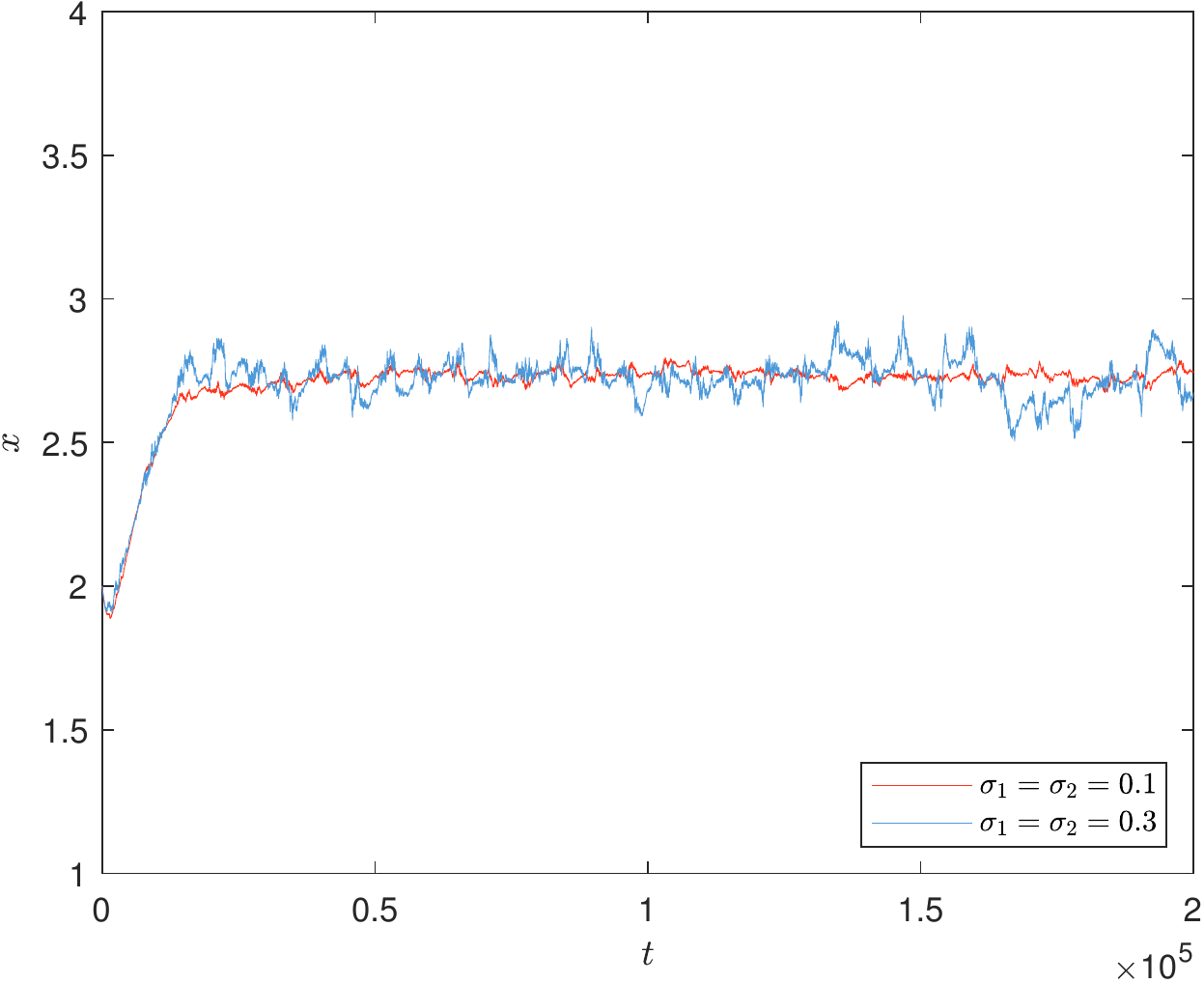}
			\caption{$ u_e=0.3+0.3\sin(t), \beta=1 $}
			\label{fig:Th4x_5}
		\end{subfigure}
		\hfill
		\begin{subfigure}[b]{0.32\textwidth}
			\includegraphics[width=\textwidth]{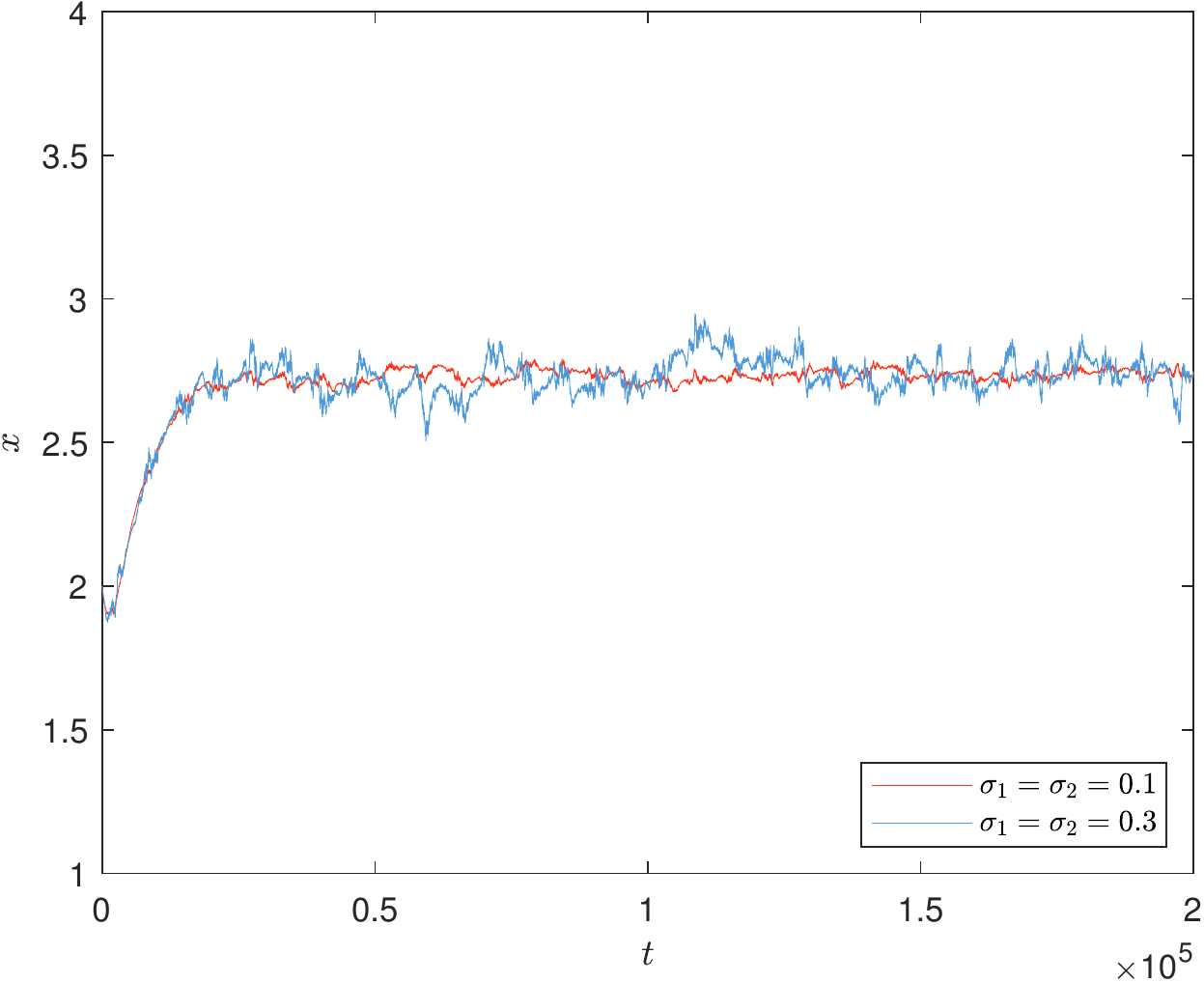}
			\caption{$ u_e=0.3+0.3\sin(t), \beta=10 $}
			\label{fig:Th4x_6}
		\end{subfigure}
		\caption{The density curves of $ x(t) $ with different $ u_e $ and $ \beta $.}
		\label{fig:Th4x}
	\end{figure}

	\begin{figure}[H]
		\centering
		\begin{subfigure}[b]{0.32\textwidth}
			\includegraphics[width=\textwidth]{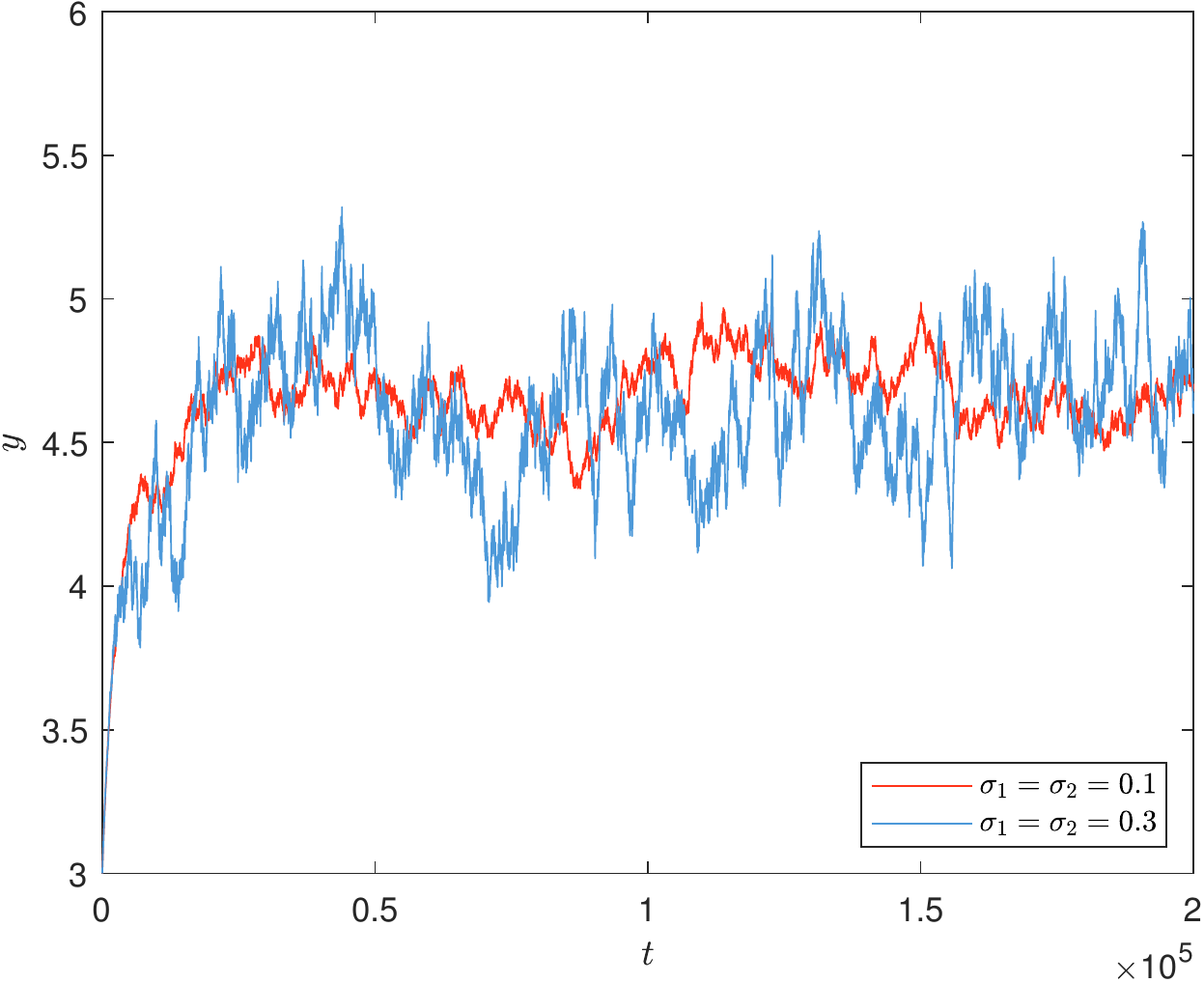}
			\caption{$ u_e=0.3, \beta=0.1 $}
			\label{fig:Th4y_1}
		\end{subfigure}
		\hfill
		\begin{subfigure}[b]{0.32\textwidth}
			\includegraphics[width=\textwidth]{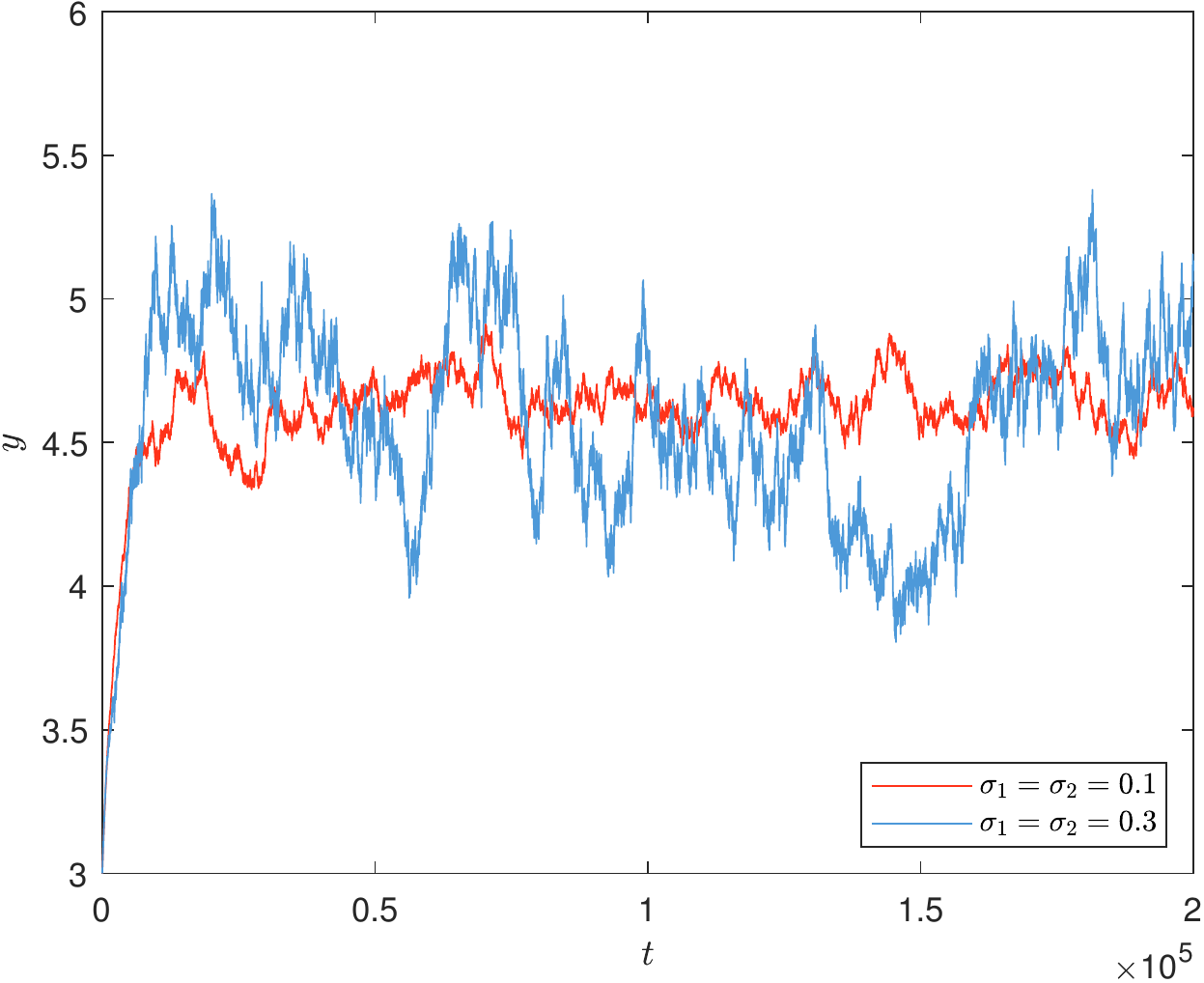}
			\caption{$ u_e=0.3, \beta=1 $}
			\label{fig:Th4y_2}
		\end{subfigure}
		\hfill
		\begin{subfigure}[b]{0.32\textwidth}
			\includegraphics[width=\textwidth]{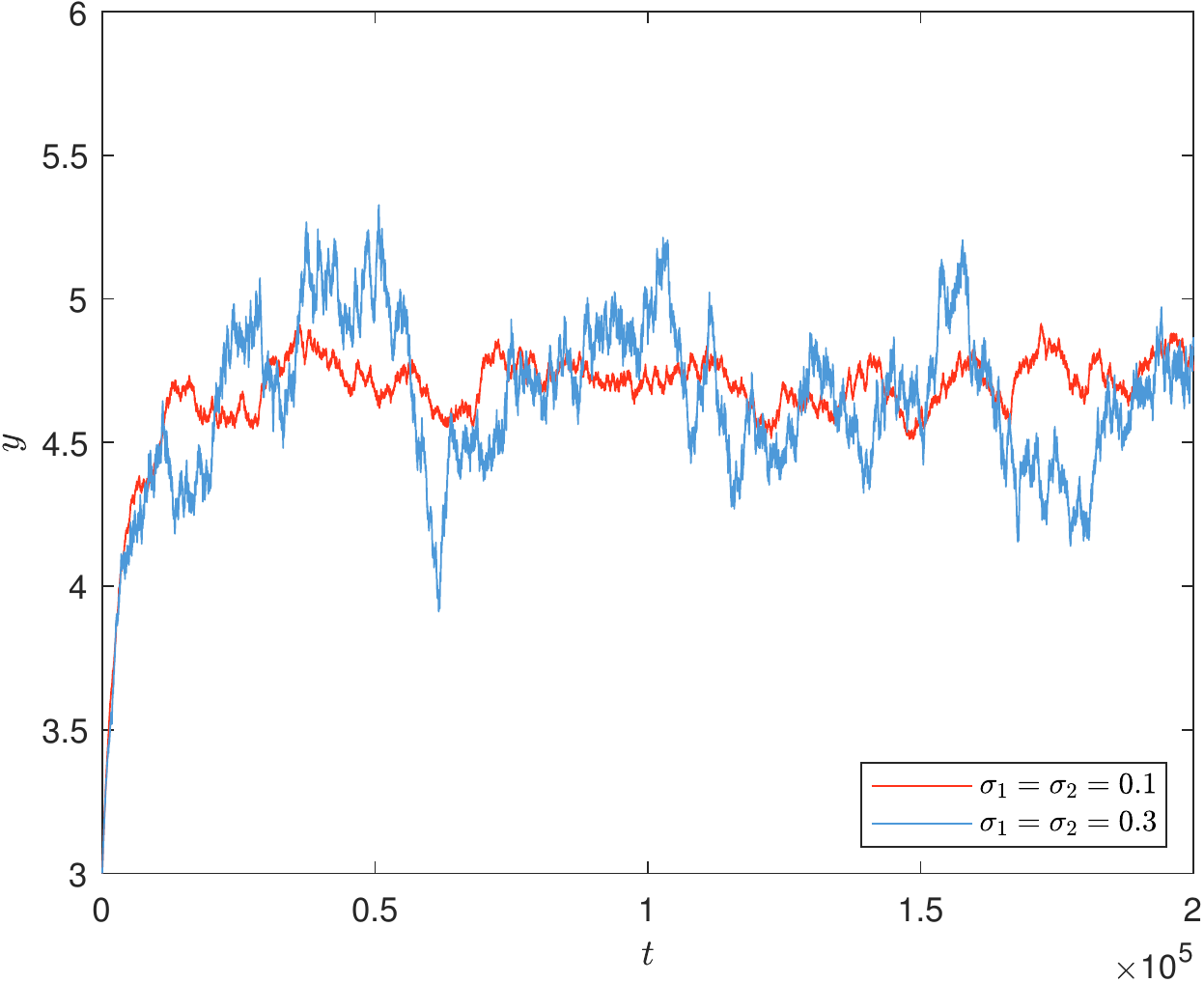}
			\caption{$ u_e=0.3, \beta=10 $}
			\label{fig:Th4y_3}
		\end{subfigure}
		\hfill
		\begin{subfigure}[b]{0.32\textwidth}
			\includegraphics[width=\textwidth]{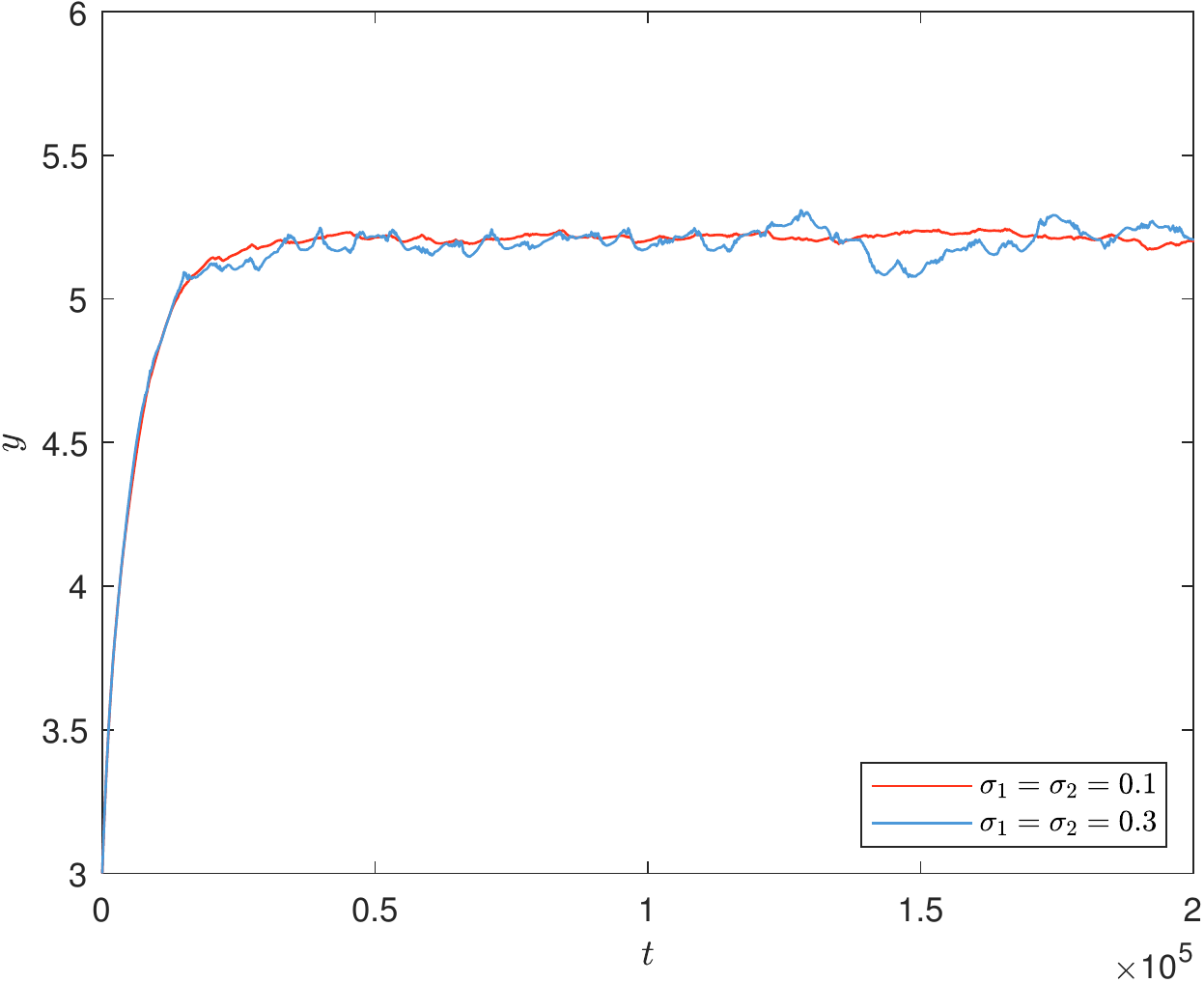}
			\caption{$ u_e=0.3+0.3\sin(t), \beta=0.1 $}
			\label{fig:Th4y_4}
		\end{subfigure}
		\hfill
		\begin{subfigure}[b]{0.32\textwidth}
			\includegraphics[width=\textwidth]{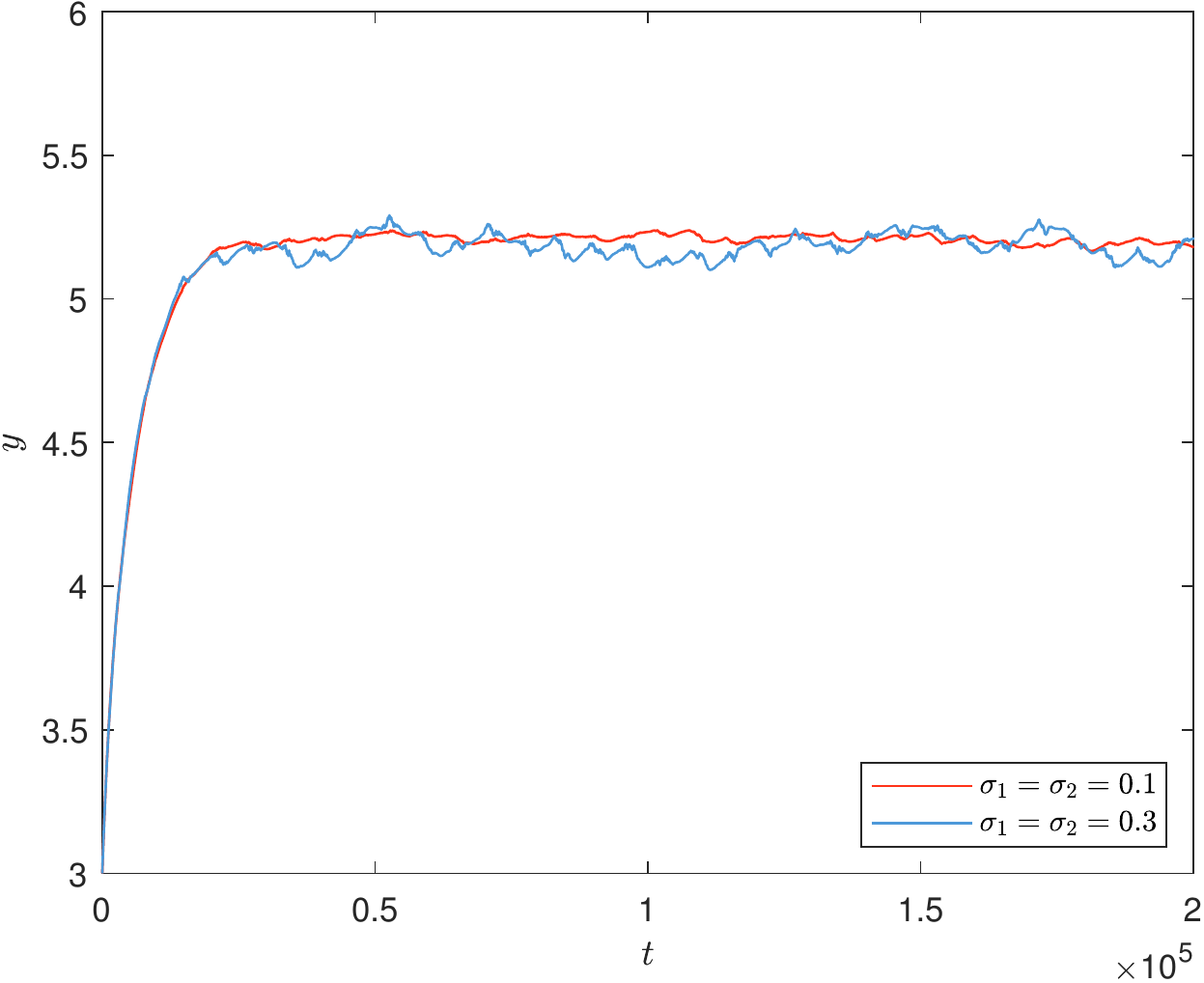}
			\caption{$ u_e=0.3+0.3\sin(t), \beta=1 $}
			\label{fig:Th4y_5}
		\end{subfigure}
		\hfill
		\begin{subfigure}[b]{0.32\textwidth}
			\includegraphics[width=\textwidth]{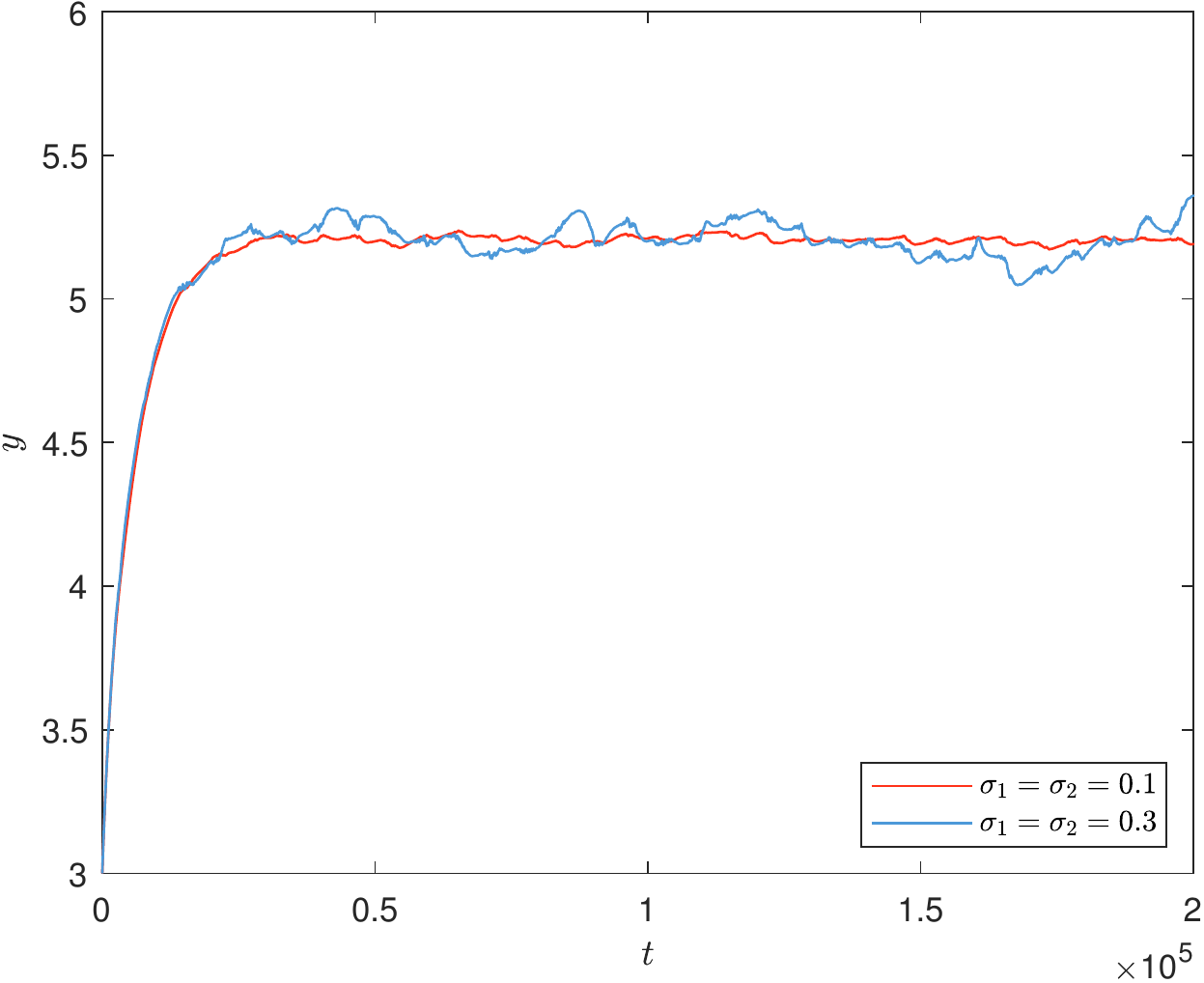}
			\caption{$ u_e=0.3+0.3\sin(t), \beta=10 $}
			\label{fig:Th4y_6}
		\end{subfigure}
		\caption{The density curves of $ y(t) $ with different $ u_e $ and $ \beta $}
		\label{fig:Th4y}
	\end{figure}

	\begin{figure}[H]
		\centering
		\begin{subfigure}{0.45\textwidth}
			\includegraphics[width=\textwidth]{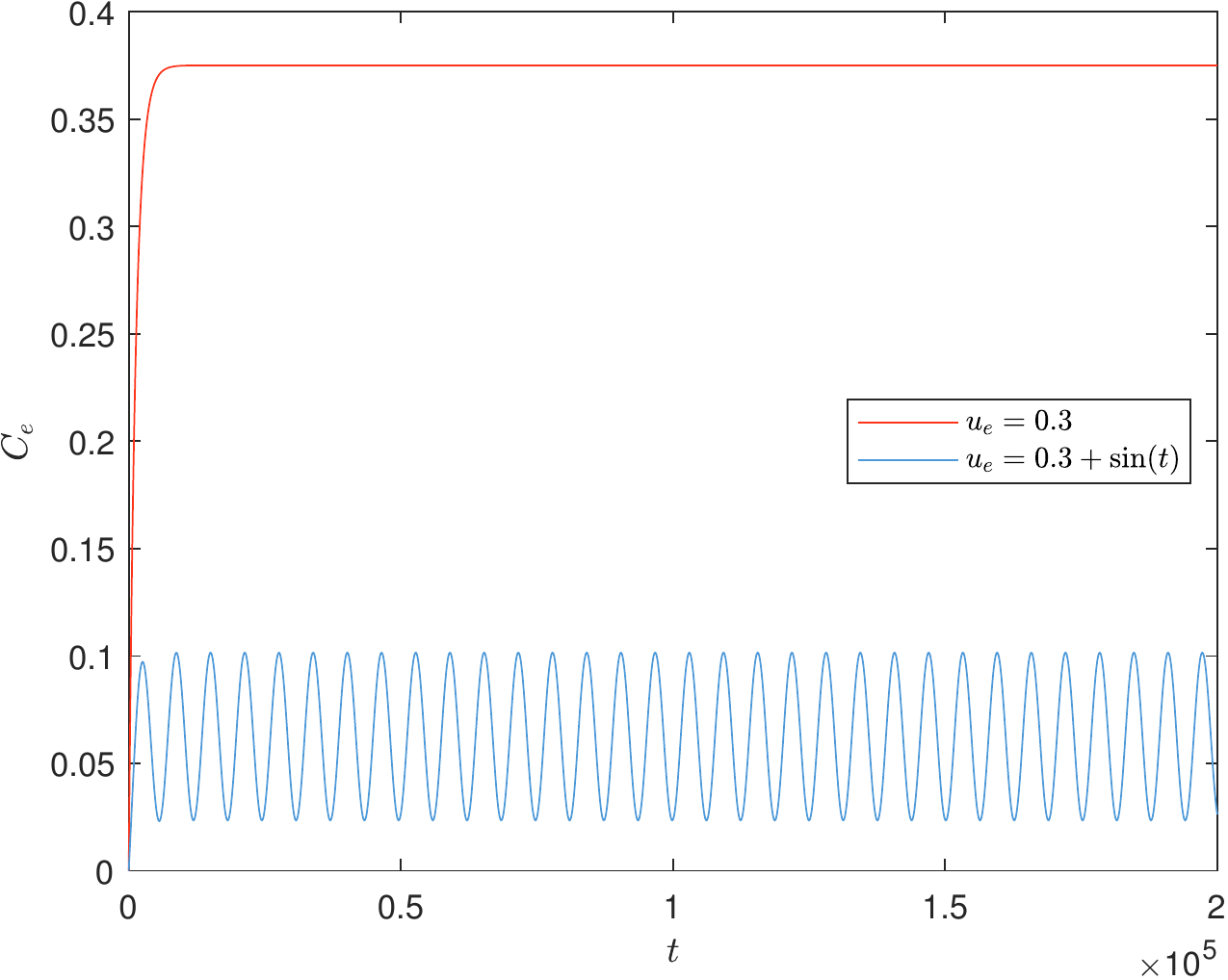}
			\caption{$ u_e=0.3$}
			\label{fig:Th4CeCo_1}
		\end{subfigure}
		\hfill
		\begin{subfigure}{0.45\textwidth}
			\includegraphics[width=\textwidth]{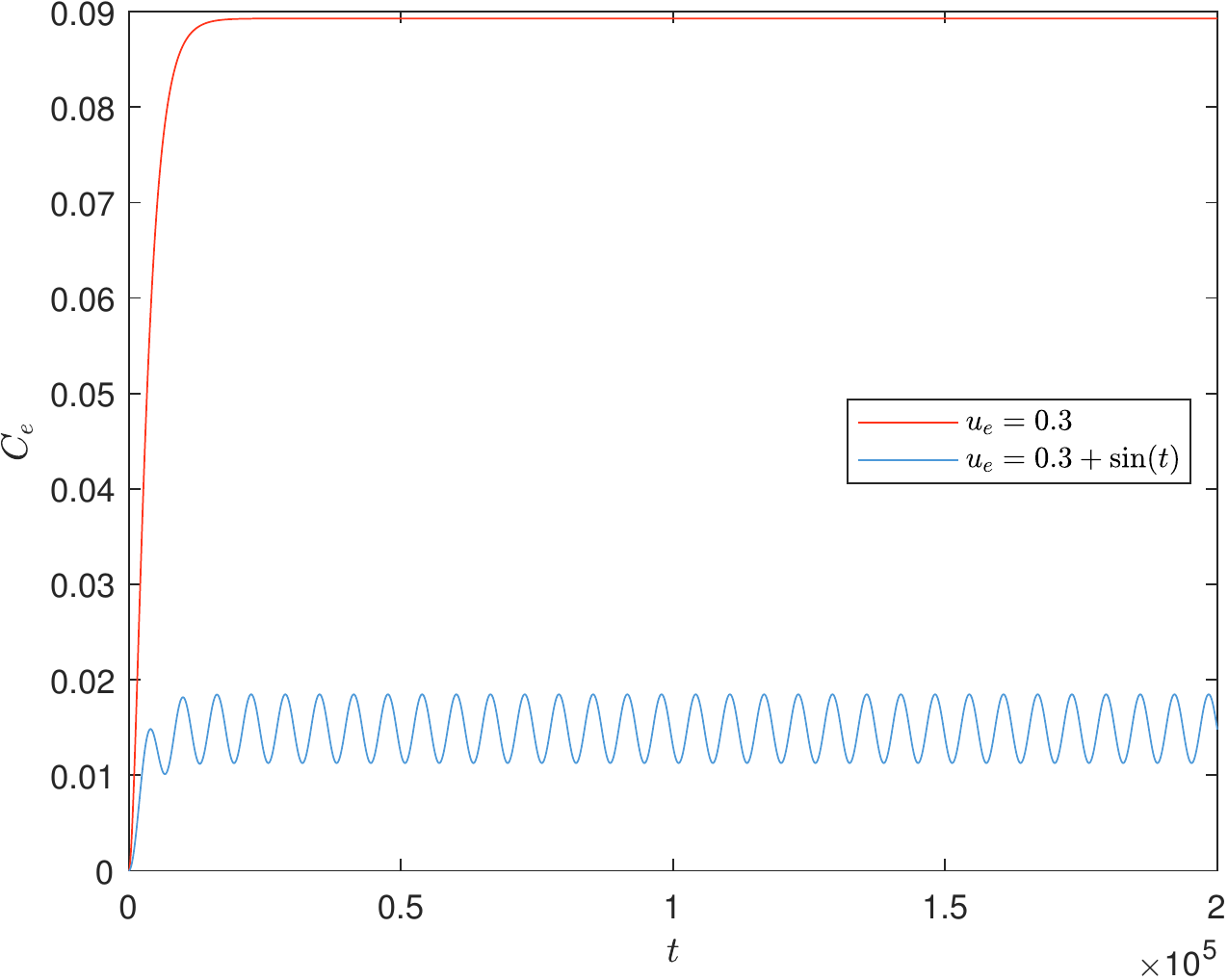}
			\caption{$ u_e=0.3+0.3\sin(t)$}
			\label{fig:Th4CeCo_2}
		\end{subfigure}
		\caption{Toxin concentration $ C_o(t) $ and $ C_e(t) $ with different $ u_e $.}
		\label{fig:Th4CeCo}
	\end{figure}

	\section{Discussion}
	\label{sec:discussion}

	The results of this paper can be further extended, especially for the single population model of such stage structure.
	There are many long-term behaviors that can be studied, and different methods can be used for theoretical practice.
	In reality, it can guide production and life in some fields, such as fishery production.
	When the production environment is polluted, it can be guided by relevant theories to take measures in time.
	At the same time, for the biological aquaculture which needs to separate the larva and adult, the subject can be developed into the theoretical basis for guiding the capture strategy~\cite{RN10}, and also can be introduced into the control theory.

	There is still space for further development for the two-stage structure model established in this paper.
	We assume that young individuals are transformed into adult individuals at a certain rate.
	In reality, the maturity process of any organism needs time accumulation, that is, the generation of time delay.
	In this regard, the model established in this paper also has certain limitations.
	Many scholars consider the existence of this factor, which leads to the time-delay term is used to describe the mature process~\cite{RN19,RN20,RN21}.

	\section{Conclusion}
	\label{sec:conclusion}

	In general, based on the previous studies, we introduce two-stage age structure, develop and expand some results.
	We develop the psychological effect function, select different psychological effect functions for different age structure, and extract some existing function types.

	In subsection~\ref{subsec:ode_model}, a single population model with age structure and psychological effects in polluted environment is established, which is a nonlinear time-varying system.
	Then in subsection~\ref{subsec:ode_stability}, we discuss the asymptotic stability of the system by Lyapunov first approximation theory, and give a sufficient condition for the stability.

	In subsection~\ref{subsec:sde_model}, based on subsection~\ref{subsec:ode_model}, the exposure rate of organisms to environmental toxins is affected by white noise, and then the exposure rate is modified into a random process, and the corresponding random single population model is established.
	The subsequent contents are mostly proved by Lyapunov function method.
	In subsection~\ref{subsec:sde_global_positive_solution}, we verify the existence of the globally unique positive solution of the stochastic model.
	In subsection~\ref{subsec:weakly_persistent}, near the non-pollution equilibrium point, we give the sufficient conditions for the weak mean persistence of a single population in the expected sense.
	In subsection~\ref{subsec:stochastic_permanence}, we give the sufficient conditions for the random persistence of a single population.

	In section~\ref{sec:simulations}, we make some numerical simulation to support our conclusions in Theorem~\ref{thm:weakly_persistent} and Theorem~\ref{thm:stochastic_permanence}.

	\section*{Conflict of Interest}
	\label{sec:Conflict of Interest}

	All authors stated that they have no conflict of interests in the paper.

	\section*{Acknowledgment}
	\label{sec:Acknowledgment}

	The research is supported by
	National Natural Science Foundation of China (61911530398),
	Special Projects of the Central Government Guiding Local Science and Technology Development (2021L3018),
	the Natural Science Foundation of Fujian Province of China (2021J01621),
	and Scientific Research Training Program in Fuzhou University (No.26040).

	\bibliographystyle{elsarticle-num}
	\biboptions{sort&compress}
	\bibliography{Wei_Wang_Yang}

\end{document}